\newcommand{\note}[1]{}
\newtheorem{theorem}{Theorem}[section]
\newtheorem{lemma}{Lemma}[section]
\newtheorem{corollary}{Corollary}[section]
\newtheorem{problem}{Problem}[section]
\newcommand{\Id}{\mathtt{I}}
\newcommand{\trace}[1]{\ensuremath{\mathrm{tr}\left(#1\right)}}
\newcommand{\diag}[1]{\ensuremath{\mathrm{diag}\left(#1\right)}}
\newcommand{\spn}[1]{\ensuremath{\mathrm{span}\left(#1\right)}}
\newcommand{\rank}[1]{\ensuremath{\mathrm{rank}\left(#1\right)}}
\newcommand{\norm}[1]{\ensuremath{\left\| #1 \right\|}}
\newcommand{\qed}{\nobreak \ifvmode \relax \else
      \ifdim\lastskip<1.5em \hskip-\lastskip
      \hskip1.5em plus0em minus0.5em \fi \nobreak
      \vrule height0.5em width0.5em depth0.1em\fi}
\begin{document}
\title{Robust Recovery of Subspace Structures by Low-Rank Representation}
\author{Guangcan Liu$^{\dag,^\ddag,\sharp}$, Member, IEEE, Zhouchen Lin$^{\flat}$\footnote{Corresponding Author}, Senior Member, IEEE, Shuicheng Yan$^\ddag$, Senior Member, IEEE, Ju Sun$^\natural$, Student Member, IEEE, Yong Yu$^\dag$, and Yi Ma$^{\S,\sharp}$, Senior Member, IEEE\\
$^\dag$ Computer Science and Engineering, Shanghai Jiao Tong University, Shanghai, China\\
$^\S$ Visual Computing Group, Microsoft Research Asia, Beijing, China\\
$^\flat$ Key Laboratory of Machine Perception, Peking University, Beijing, China\\
$^\ddag$ Electrical and Computer Engineering, National University of Singapore, Singapore\\
$^\sharp$ Coordinated Science Laboratory, University of Illinois at Urbana-Champaign, USA\\
$^\natural$ Electrical Engineering, Columbia University, USA}
\maketitle

\begin{abstract}In this work we address the subspace clustering problem. Given a set of data samples (vectors) approximately drawn from a union of multiple subspaces, our goal is to cluster the samples into their respective subspaces and remove possible outliers as well. To this end, we propose a novel objective function named Low-Rank Representation (LRR), which seeks the lowest-rank representation among all the candidates that can represent the data samples as linear combinations of the bases in a given dictionary. It is shown that the convex program associated with LRR solves the subspace clustering problem in the following sense: when the data is clean, we prove that LRR exactly recovers the true subspace structures; when the data are contaminated by outliers, we prove that under certain conditions LRR can exactly recover the row space of the original data and detect the outlier as well; for data corrupted by arbitrary sparse errors, LRR can also approximately recover the row space with theoretical guarantees. Since the subspace membership is provably determined by the row space, these further imply that LRR can perform robust subspace clustering and error correction, in an efficient and effective way.
\end{abstract}
\begin{keywords}
low-rank representation, subspace clustering, segmentation, outlier detection.
\end{keywords}

\section{Introduction}\label{sec:intro}
In pattern analysis and signal processing, an underlying tenet is that the data often contains some type of \emph{structure} that enables intelligent representation and processing. So one usually needs a parametric model to characterize a given set of data. To this end, the well-known (linear) \emph{subspaces} are possibly the most common choice, mainly because they are easy to compute and often effective in real applications. Several types of visual data, such as motion \cite{ijcv_1998_multibody,eccv_2006_lsa,motion_pami_2010_Rene}, face \cite{liu:2011:iccv} and texture \cite{tpami_2008_acl}, have been known to be well characterized by subspaces. Moreover, by applying the concept of reproducing kernel Hilbert space, one can easily extend the linear models to handle nonlinear data. So the subspace methods have been gaining much attention in recent years. For example, the widely used Principal Component Analysis (PCA) method and the recently established matrix completion \cite{CandesPIEEE} and recovery \cite{journal_2009_rpca2} methods are essentially based on the hypothesis that the data is approximately drawn from a low-rank subspace. However, a given data set can seldom be well described by a \emph{single} subspace. A more reasonable model is to consider data as lying near \emph{several} subspaces, namely the data is considered as samples approximately drawn from a mixture of several low-rank subspaces, as shown in Fig.\ref{fig:subspace}.
\begin{figure}
\begin{center}
\includegraphics[width=0.4\textwidth]{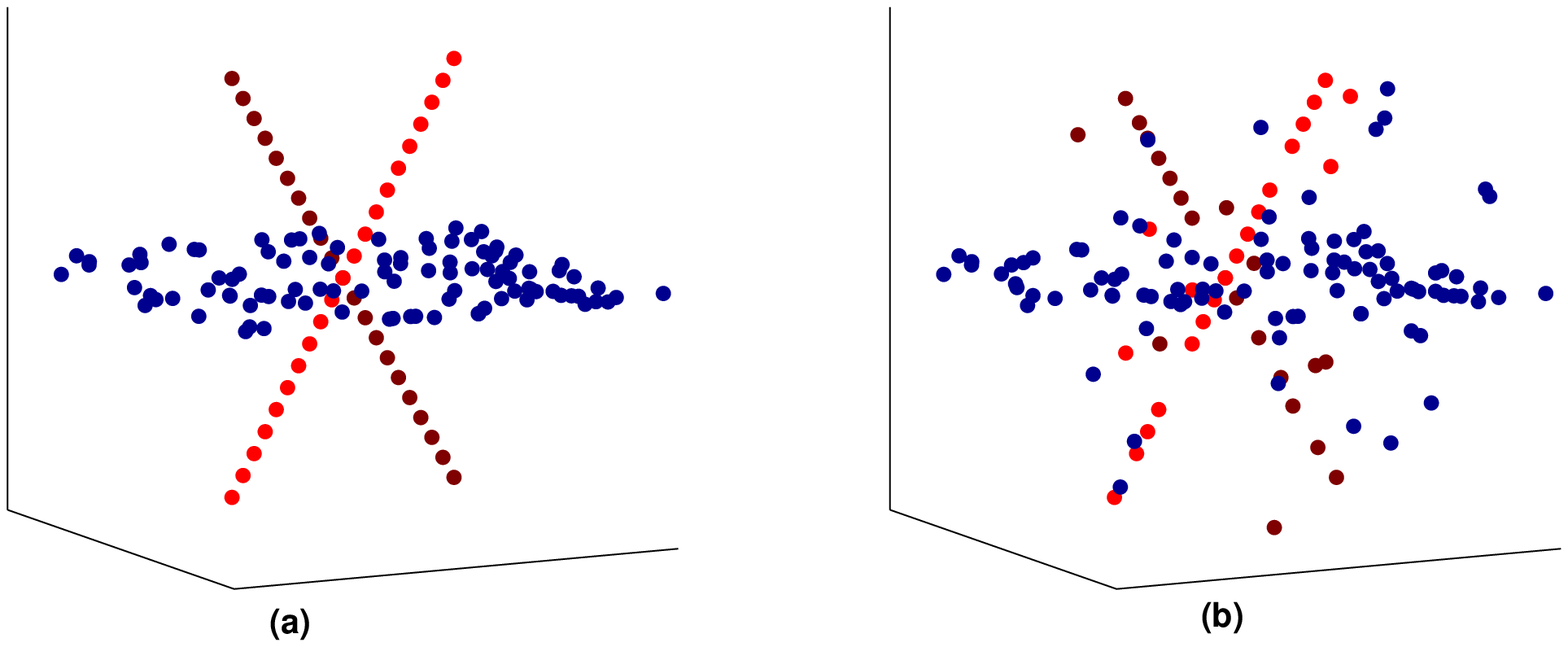}
\caption{\textbf{A mixture of subspaces consisting of a 2D plane and two 1D lines}. (a) The samples are strictly drawn from the underlying subspaces. (b) The samples are approximately drawn from the underlying subspaces.} \label{fig:subspace}\vspace{-0.25in}
\end{center}
\end{figure}

The generality and importance of subspaces naturally lead to a challenging problem of subspace segmentation (or clustering), whose goal is to segment (cluster or group) data into clusters with each cluster corresponding to a subspace. Subspace segmentation is an important data clustering problem and arises in numerous research areas, including computer vision \cite{motion_pami_2010_Rene,cvpr_2003_ksubspace,hgm}, image processing \cite{tpami_2008_acl,cacm_1981_ransac} and system identification \cite{subspace_system_identification}. When the data is clean, i.e., the samples are strictly drawn from the subspaces, several existing methods (e.g., \cite{ijcv_1998_factor,cvpr_2009_ssc,icml_2010_lrr}) are able to exactly solve the subspace segmentation problem. So, as pointed out by \cite{motion_pami_2010_Rene,icml_2010_lrr}, the main challenge of subspace segmentation is to handle the \emph{errors} (e.g., noise and corruptions) that possibly exist in data, i.e., to handle the data that may not strictly follow subspace structures. With this viewpoint, in this paper we therefore study the following \emph{subspace clustering} \cite{candes:2012:sr:outliers} problem.
\begin{problem}[Subspace Clustering]\label{pb:sp_recover}
Given a set of data samples \emph{approximately} (i.e., the data may contain errors) drawn from a union of linear subspaces, correct the possible errors and segment all samples into their respective subspaces simultaneously.
\end{problem}
\begin{figure}
\begin{center}
\includegraphics[width=0.45\textwidth]{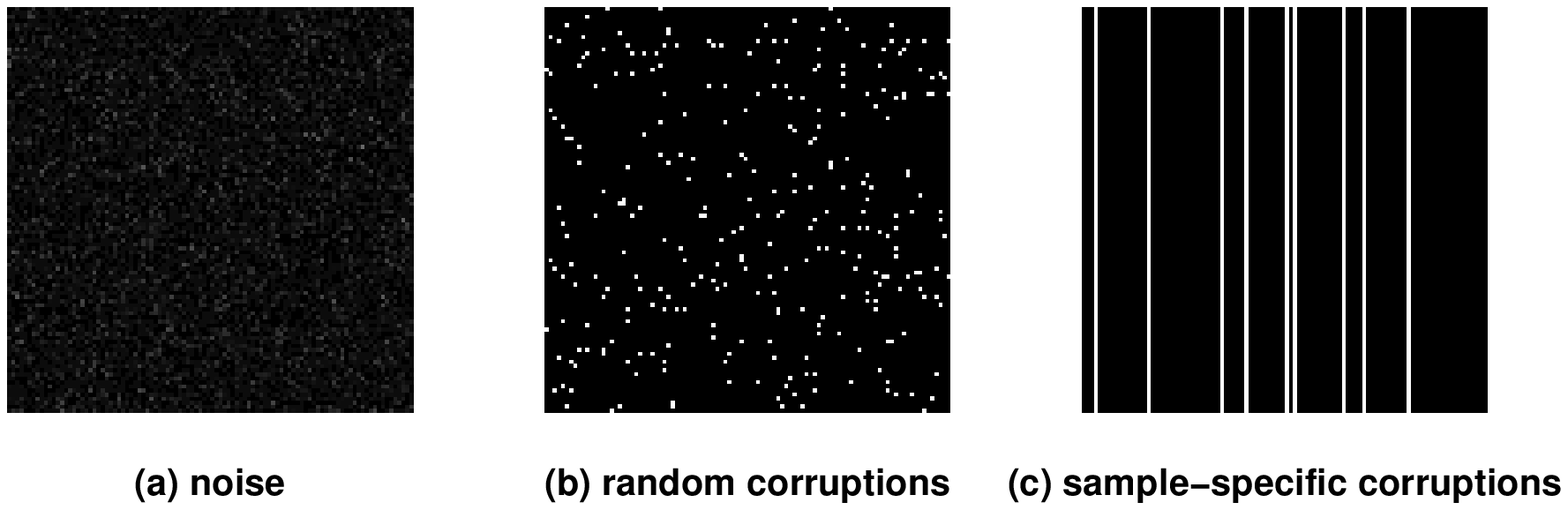}
\caption{\textbf{Illustrating three typical types of errors:} (a) noise \cite{CandesPIEEE}, which indicates the phenomena that the data is slightly perturbed around the subspaces (what we show is a perturbed data matrix whose columns are samples drawn from the subspaces); (b) random corruptions \cite{journal_2009_rpca2}, which indicate that a fraction of random entries are grossly corrupted; (c) sample-specific corruptions (and outliers), which indicate the phenomena that a fraction of the data samples (i.e., columns of the data matrix) are far away from the subspaces.} \label{fig:noise}\vspace{-0.25in}
\end{center}
\end{figure}

Notice that the word ``error'' generally refers to the \emph{deviation} between model assumption (i.e., subspaces) and data. It could exhibit as noise \cite{CandesPIEEE}, missed entries \cite{CandesPIEEE}, outliers \cite{xu:2010:nips} and corruptions \cite{journal_2009_rpca2} in reality. Fig.\ref{fig:noise} illustrates three typical types of errors under the context of subspace modeling. In this work, we shall focus on the sample-specific corruptions (and outliers) shown in Fig.\ref{fig:noise}(c), with mild concerns to the cases of Fig.\ref{fig:noise}(a) and Fig.\ref{fig:noise}(b). Notice that an outlier is from a different model other than subspaces, and is essentially different from a corrupted sample that belongs to the subspaces. We put them into the same category just because they can be handled in the same way, as will be shown in Section \ref{sec:assume2}.

To recover the subspace structures from the data containing errors, we propose a novel method termed \emph{low-rank representation} (LRR) \cite{icml_2010_lrr}. Given a set of data samples each of which can be represented as a linear combination of the bases in a dictionary, LRR aims at finding the \emph{lowest-rank} representation of all data jointly. The computational procedure of LRR is to solve a \emph{nuclear norm} \cite{phd_2002_nuclear} regularized optimization problem, which is convex and can be solved in polynomial time. By choosing a specific dictionary, it is shown that LRR can well solve the subspace clustering problem: when the data is clean, we prove that LRR exactly recovers the \emph{row space} of the data; for the data contaminated by outliers, we prove that under certain conditions LRR can \emph{exactly} recover the row space of the original data and detect the outlier as well; for the data corrupted by arbitrary errors, LRR can also approximately recover the row space with theoretical guarantees. Since the subspace membership is provably determined by the row space (we will discuss this in Section \ref{subsec:row_space}), these further imply that LRR can perform robust subspace clustering and error correction, in an efficient way. In summary, the contributions of this work include:
\begin{itemize}
\item[$\bullet$] We develop a simple yet effective method, termed LRR, which has been used to achieve state-of-the-art performance in several applications such as motion segmentation \cite{liu:2011:iccv}, image segmentation \cite{chen:2011:iccv}, saliency detection \cite{lang:2011:tip} and face recognition \cite{liu:2011:iccv}.
\item[$\bullet$] Our work extends the recovery of corrupted data from a single subspace \cite{journal_2009_rpca2} to multiple subspaces. Compared to \cite{corr_2008_union}, which requires the bases of subspaces to be known for handling the corrupted data from multiple subspaces, our method is \emph{autonomous}, i.e., no extra clean data is required.
\item[$\bullet$] Theoretical results for robust recovery are provided. While our analysis shares similar features as previous work in matrix completion \cite{CandesPIEEE} and robust PCA (RPCA) \cite{journal_2009_rpca2,xu:2010:nips}, it is considerably more challenging due to the fact that there is a dictionary matrix in LRR.
\end{itemize}
\vspace{-0.15in}\section{Related Work}\label{sec:related_work}
In this section, we discuss some existing subspace segmentation methods. In general, existing works can be roughly divided into four main categories: mixture of Gaussian, factorization, algebraic and spectral-type methods.

In statistical learning, mixed data is typically modeled as a set of independent samples drawn from a mixture of probabilistic distributions. As a single subspace can be well modeled by a (degenerate) Gaussian distribution, it is straightforward to assume that each probabilistic distribution is Gaussian, i.e., adopting a mixture of Gaussian models. Then the problem of segmenting the data is converted to a model estimation problem. The estimation can be performed either by using the Expectation Maximization (EM) algorithm to find a maximum likelihood estimate, as done in \cite{cvpr_2004_em}, or by iteratively finding a min-max estimate, as adopted by K-subspaces \cite{cvpr_2003_ksubspace} and Random Sample Consensus (RANSAC) \cite{cacm_1981_ransac}. These methods are sensitive to errors. So several efforts have been made for improving their robustness, e.g., the Median K-flats \cite{corr_2009_kflats} for K-subspaces, the work \cite{cvprw_2006_ransac} for RANSAC, and \cite{tpami_2008_acl} use a coding length to characterize a mixture of Gaussian. These refinements may introduce some robustness. Nevertheless, the problem is still not well solved due to the optimization difficulty, which is a bottleneck for these methods.

Factorization based methods \cite{ijcv_1998_factor} seek to approximate the given data matrix as a product of two matrices, such that the support pattern for one of the factors reveals the segmentation of the samples. In order to achieve robustness to noise, these methods modify the formulations by adding extra regularization terms. Nevertheless, such modifications usually lead to non-convex optimization problems, which need heuristic algorithms (often based on alternating minimization or EM-style algorithms) to solve. Getting stuck at local minima may undermine their performances, especially when the data is grossly corrupted. It will be shown that LRR can be regarded as a robust generalization of the method in \cite{ijcv_1998_factor} (which is referred to as PCA in this paper). The formulation of LRR is convex and can be solved in polynomial time.

Generalized Principal Component Analysis (GPCA) \cite{siam_2008_gpca} presents an algebraic way to model the data drawn from a union of multiple subspaces. This method describes a subspace containing a data point by using the gradient of a polynomial at that point. Then subspace segmentation is made equivalent to fitting the data with polynomials. GPCA can guarantee the success of the segmentation under certain conditions, and it does not impose any restriction on the subspaces. However, this method is sensitive to noise due to the difficulty of estimating the polynomials from real data, which also causes the high computation cost of GPCA. Recently, Robust Algebraic Segmentation (RAS) \cite{ijcv_2009_ras} has been proposed to resolve the robustness issue of GPCA. However, the computation difficulty for fitting polynomials is unfathomably large. So RAS can make sense only when the data dimension is low and the number of subspaces is small.

As a data clustering problem, subspace segmentation can be done by firstly learning an affinity matrix from the given data, and then obtaining the final segmentation results by spectral clustering algorithms such as Normalized Cuts (NCut) \cite{tpami_2000_ncut}. Many existing methods such as Sparse Subspace Clustering (SSC) \cite{cvpr_2009_ssc}, Spectral Curvature Clustering (SCC) \cite{Chen:2009:SCC,Chen:2009:SCC:theory}, Spectral Local Best-fit Flats (SLBF) \cite{Zhang:2011:lbf,Lerman:2011:LLA}, the proposed LRR method and \cite{eccv_2006_lsa,sc:iccv:2009}, possess such spectral nature, so called as spectral-type methods. The main difference among various spectral-type methods is the approach for learning the affinity matrix. Under the assumption that the data is clean and the subspaces are independent, \cite{cvpr_2009_ssc} shows that solution produced by sparse representation (SR) \cite{donoho:2004:sr} could achieve the so-called $\ell_1$ Subspace Detection Property ($\ell_1$-SDP): the within-class affinities are sparse and the between-class affinities are all zeros. In the presence of outliers, it is shown in \cite{candes:2012:sr:outliers} that the SR method can still obey $\ell_1$-SDP. However, $\ell_1$-SDP may not be sufficient to ensure the success of subspace segmentation \cite{Behrooz:cvpr:2011}. Recently, Lerman and Zhang \cite{Lerman:2011:Lp} prove that under certain conditions the multiple subspace structures can be exactly recovered via $\ell_p$ ($p\leq1$) minimization. Unfortunately, since the formulation is not convex, it is still unknown how to efficiently obtain the globally optimal solution. In contrast, the formulation of LRR is convex and the corresponding optimization problem can be solved in polynomial time. What is more, even if the data is contaminated by outliers, the proposed LRR method is proven to exactly recover the right row space, which provably determines the subspace segmentation results (we shall discuss this in Section \ref{subsec:row_space}). In the presence of arbitrary errors (e.g., corruptions, outliers and noise), LRR is also guaranteed to produce near recovery.
\vspace{-0.15in}\section{Preliminaries and Problem Statement}\label{sec:notation}
\subsection{Summary of Main Notations}
In this work, matrices are represented with capital symbols. In particular, $\Id$ is used to denote the identity matrix, and the entries of matrices are denoted by using $[\cdot]$ with subscripts. For instance, $M$ is a matrix, $[M]_{ij}$ is its $(i,j)$-th entry, $[M]_{i,:}$ is its $i$-th row, and $[M]_{:,j}$ is its $j$-th column. For ease of presentation, the horizontal (resp. vertical) concatenation of a collection of matrices along row (resp. column) is denoted by $[M_1,M_2,\cdots,M_k]$ (resp. $[M_1;M_2;\cdots;M_k]$). The block-diagonal matrix formed by a collection of matrices $M_1,M_2,\cdots,M_k$ is denoted by
\begin{eqnarray}\label{eq:blk}
\diag{M_1,M_2,\cdots,M_k}=\left[\begin{array}{cccc}
M_1&0&0&0\\
0&M_2&0&0\\
0&0&\ddots&0\\
0&0&0&M_k\\
\end{array}\right].
\end{eqnarray}

The only used vector norm is the $\ell_2$ norm, denoted by $\norm{\cdot}_2$. A variety of norms on matrices will be used. The matrix $\ell_0$, $\ell_{2,0}$,
$\ell_1$, $\ell_{2,1}$ norms are defined by $\norm{M}_0=\#\{(i,j):[M]_{ij}\neq0\}$, $\norm{M}_{2,0}=\#\{i:\|[M]_{:,i}\|_2\neq0\}$, $\norm{M}_1=\sum_{i,j}|[M]_{ij}|$ and $\norm{M}_{2,1}=\sum_{i}\|[M]_{:,i}\|_2$, respectively. The matrix $\ell_{\infty}$ norm is defined as $\norm{M}_{\infty}=\max_{i,j}|[M]_{ij}|$. The spectral norm of a matrix $M$ is denoted by $\norm{M}$, i.e., $\norm{M}$ is the largest singular value of $M$. The Frobenius norm and the nuclear norm (the sum of singular values of a matrix) are denoted by $\norm{M}_F$ and $\norm{M}_*$, respectively. The Euclidean inner product between two matrices is $\langle{}M,N\rangle=\trace{M^TN}$, where $M^T$ is the transpose of a matrix and $\trace{\cdot}$ is the trace of a matrix.

The supports of a matrix $M$ are the indices of its nonzero entries, i.e., $\{(i,j):[M]_{ij}\neq0\}$. Similarly, its column supports are the indices of its nonzero columns. The symbol $\mathcal{I}$ (superscripts, subscripts, etc.) is used to denote the column supports of a matrix, i.e., $\mathcal{I}=\{(i):\|[M]_{:,i}\|_2\neq0\}$. The corresponding complement set (i.e., zero columns) is $\mathcal{I}^c$. There are two projection operators associated with $\mathcal{I}$ and $\mathcal{I}^c$: $\mathcal{P}_{\mathcal{I}}$ and $\mathcal{P}_{\mathcal{I}^c}$. While applying them to a matrix $M$, the matrix $\mathcal{P}_{\mathcal{I}}(M)$ (resp. $\mathcal{P}_{\mathcal{I}^c}(M)$) is obtained from $M$ by setting $[M]_{:,i}$ to zero for all $i\not\in{}\mathcal{I}$ (resp. $i\not\in{}\mathcal{I}^c$).

We also adopt the conventions of using $\spn{M}$ to denote the linear space spanned by the columns of a matrix $M$, using $y\in\spn{M}$ to denote that a vector $y$ belongs to the space $\spn{M}$, and using $Y\in\spn{M}$ to denote that all column vectors of $Y$ belong to $\spn{M}$.

Finally, in this paper we use several terminologies, including ``block-diagonal matrix'', ``union and sum of subspaces'', ``independent (and disjoint) subspaces'', ``full SVD and skinny SVD'', ``pseudoinverse'', ``column space and row space'' and ``affinity degree''. These terminologies are defined in Appendix.
\vspace{-0.1in}\subsection{Relations Between Segmentation and Row Space}\label{subsec:row_space}
\begin{figure}
\begin{center}
\includegraphics[width=0.2\textwidth]{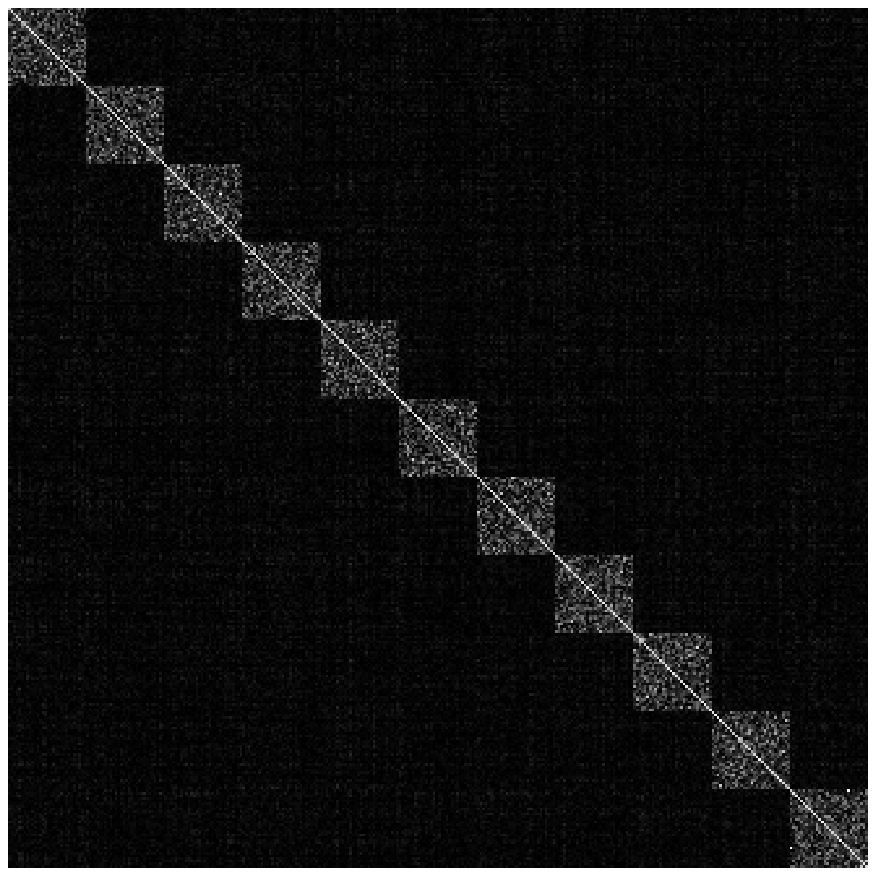}
\caption{\textbf{An example of the matrix $V_0V_0^T$ computed from dependent subspaces.} In this example, we create 11 pairwise disjoint subspaces each of which is of dimension 20, and draw 20 samples from each subspace without errors. The ambient dimension is 200, which is smaller than the sum of the dimensions of the subspaces. So the subspaces are dependent and $V_0V_0^T$ is not strictly block-diagonal. Nevertheless, it is simple to see that high segmentation accuracy can be achieved by using the above affinity matrix to do spectral clustering.}\vspace{-0.25in}\label{fig:sim}
\end{center}
\end{figure}
Let $X_0$ with skinny SVD $U_0\Sigma_0V_0^T$ be a collection of data samples \emph{strictly} drawn from a union of multiple subspaces (i.e., $X_0$ is clean), the subspace membership of the samples is determined by the row space of $X_0$. Indeed, as shown in \cite{ijcv_1998_factor}, when subspaces are independent, $V_0V_0^T$ forms a block-diagonal matrix: the $(i,j)$-th entry of $V_0V_0^T$ can be nonzero only if the $i$-th and $j$-th samples are from the same subspace. Hence, this matrix, termed as Shape Interaction Matrix (SIM) \cite{ijcv_1998_factor}, has been widely used for subspace segmentation. Previous approaches simply compute the SVD of the data matrix $X=U_X\Sigma_{X}V_X^T$ and then use $|V_XV_X^T|$ \footnote{For a matrix $M$, $|M|$ denotes the matrix with the $(i,j)$-th entry being the absolute value of $[M]_{ij}$.} for subspace segmentation. However, in the presence of outliers and corruptions, $V_X$ can be far away from $V_0$ and thus the segmentation using such approaches is inaccurate. In contrast, we show that LRR can recover $V_0V_0^T$ even when the data matrix $X$ is contaminated by outliers.

If the subspaces are not independent, $V_0V_0^T$ may not be strictly block-diagonal. This is indeed well expected, since when the subspaces have nonzero (nonempty) intersections, then some samples may belong to multiple subspaces simultaneously. When the subspaces are pairwise disjoint (but not independent), our extensive numerical experiments show that $V_0V_0^T$ may still be close to be block-diagonal, as exemplified in Fig. \ref{fig:sim}. Hence, to recover $V_0V_0^T$ is still of interest to subspace segmentation.
\vspace{-0.1in}\subsection{Problem Statement}
Problem \ref{pb:sp_recover} only roughly describes what we want to study. More precisely, this paper addresses the following problem.
\begin{problem}[Subspace Clustering]\label{pb:sp_recover_precise}
Let $X_0\in{}\mathbb{R}^{d\times{}n}$ with skinny SVD $U_0\Sigma_0V_0$ store a set of $n$ $d$-dimensional samples (vectors) strictly drawn from a union of $k$ subspaces $\{\mathcal{S}_i\}_{i=1}^{k}$ of unknown dimensions ($k$ is unknown either). Given a set of observation vectors $X$ generated by
\begin{eqnarray*}
X=X_0+E_0,
\end{eqnarray*}
the goal is to recover the row space of $X_0$, or to recover the true SIM $V_0V_0^T$ as equal.
\end{problem}

The recovery of row space can guarantee high segmentation accuracy, as analyzed in Section \ref{subsec:row_space}. Also, the recovery of row space naturally implies the success in error correction. So it is sufficient to set the goal of subspace clustering as the recovery of the row space identified by $V_0V_0^T$. For ease of exploration, we consider the problem under three assumptions of increasing practicality and difficulty.
\subsubsection*{Assumption 1} The data is clean, i.e., $E_0=0$.
\subsubsection*{Assumption 2} A fraction of the data samples are grossly corrupted and the others are clean, i.e., $E_0$ has sparse column supports as shown in Fig.\ref{fig:noise}(c).
\subsubsection*{Assumption 3} A fraction of the data samples are grossly corrupted and the others are contaminated by small Gaussian noise, i.e., $E_0$ is characterized by a combination of the models shown in Fig.\ref{fig:noise}(a) and Fig.\ref{fig:noise}(c).

Unlike \cite{icml_2010_lrr}, the independent assumption on the subspaces is not highlighted in this paper, because the analysis in this work focuses on recovering $V_0V_0^T$ other than a pursuit of block-diagonal
matrix.
\vspace{-0.15in}\section{Low-Rank Representation for Matrix Recovery}\label{sec:lrr}
In this section we abstractly present the LRR method for recovering a matrix from corrupted observations. The basic theorems and optimization algorithms will be presented. The specific methods and theories for handling the subspace clustering problem are deferred until Section \ref{sec:subrecovery}.
\vspace{-0.1in}\subsection{Low-Rank Representation}\label{sec:general_results:formulation}
In order to recover the low-rank matrix $X_0$ from the given observation matrix $X$ corrupted by errors $E_0$ ($X=X_0+E_0$), it is straightforward to consider the following regularized rank minimization problem:
\begin{eqnarray}\label{eq:rpca}
\min_{D,E} \rank{D}+\lambda\norm{E}_{\ell}, & \textrm{s.t.}&X=D+E,
\end{eqnarray}
where $\lambda>0$ is a parameter and $\norm{\cdot}_{\ell}$ indicates certain regularization strategy, such as the squared Frobenius norm (i.e., $\|\cdot\|_F^2$) used for modeling the noise as show in Fig.\ref{fig:noise}(a) \cite{CandesPIEEE}, the $\ell_0$ norm adopted by \cite{journal_2009_rpca2} for characterizing the random corruptions as shown in Fig.\ref{fig:noise}(b), and the $\ell_{2,0}$ norm adopted by \cite{icml_2010_lrr,xu:2010:nips} for dealing with sample-specific corruptions and outliers. Suppose $D^*$ is a minimizer with respect to the variable $D$, then it gives a low-rank recovery to the original data $X_0$.

The above formulation is adopted by the recently established Robust PCA (RPCA) method \cite{journal_2009_rpca2} which has been used to achieve the state-of-the-art performance in several applications (e.g., \cite{tilt:accv:2010}). However, this formulation implicitly assumes that the underlying data structure is a single low-rank subspace. When the data is drawn from a union of multiple subspaces, denoted as $\mathcal{S}_1,\mathcal{S}_2,\cdots,\mathcal{S}_k$, it actually treats the data as being sampled from a single subspace defined by $\mathcal{S}=\sum_{i=1}^{k}\mathcal{S}_i$. Since the sum $\sum_{i=1}^{k}\mathcal{S}_i$ can be much larger than the union $\cup_{i=1}^{k}\mathcal{S}_i$, the specifics of the individual subspaces are not well considered and so the recovery may be inaccurate.

To better handle the mixed data, here we suggest a more general rank minimization problem defined as follows:
\begin{eqnarray}\label{eq:general:lrr}
\min_{Z,E} \rank{Z}+\lambda\norm{E}_{\ell},& \textrm{s.t.}& X=AZ+E,
\end{eqnarray}
where $A$ is a ``dictionary'' that linearly spans the data space. We call the minimizer $Z^*$ (with regard to the variable $Z$) the ``lowest-rank representation'' of data $X$ with respect to a dictionary $A$. After obtaining an optimal solution $(Z^*,E^*)$, we could recover the original data by using $AZ^*$ (or $X-E^*$). Since $\rank{AZ^*}\leq\rank{Z^*}$, $AZ^*$ is also a low-rank recovery to the original data $X_0$. By setting $A=\Id$, the formulation \eqref{eq:general:lrr} falls back to \eqref{eq:rpca}. So LRR could be regarded as a generalization of RPCA that essentially uses the standard bases as the dictionary.
By choosing an appropriate dictionary $A$, as we will see, the lowest-rank representation can recover the underlying row space so as to reveal the true segmentation of data. So, LRR could handle well the data drawn from a union of multiple subspaces.
\vspace{-0.1in}\subsection{Analysis on the LRR Problem}
The optimization problem \eqref{eq:general:lrr} is difficult to solve due to the discrete nature of the rank function. For ease of exploration, we begin with the ``ideal'' case that the data is clean. That is, we consider the following rank minimization problem:
\begin{eqnarray}\label{eq:lrr:low_rank}
\min_{Z} \rank{Z}, & \textrm{s.t.} & X=AZ.
\end{eqnarray}
It is easy to see that the solution to \eqref{eq:lrr:low_rank} may not be unique. As a common practice in rank minimization problems, we replace the rank function with the nuclear norm, resulting in the following convex optimization problem:
\begin{eqnarray}\label{eq:lr:nuclear_norm_minization}
\min_{Z} \norm{Z}_*, & \textrm{s.t.} & X=AZ.
\end{eqnarray}
We will show that the solution to \eqref{eq:lr:nuclear_norm_minization} is also a solution to \eqref{eq:lrr:low_rank} and this special solution is useful for subspace segmentation.

In the following, we shall show some general properties of the minimizer to problem \eqref{eq:lr:nuclear_norm_minization}. These general conclusions form the foundations of LRR (the proofs can be found in Appendix).
\subsubsection{Uniqueness of the Minimizer}The nuclear norm is convex, but not strongly convex. So it is possible that problem \eqref{eq:lr:nuclear_norm_minization} has multiple optimal solutions. Fortunately, it can be proven that the minimizer to problem \eqref{eq:lr:nuclear_norm_minization} is \emph{always} uniquely defined by a closed form. This is summarized in the following theorem.
\begin{theorem}\label{theorem:unique:nonoise}
Assume $A\neq0$ and $X=AZ$ have feasible solution(s), i.e., $X\in\spn{A}$. Then
\begin{eqnarray}\label{eq:lrr:close_form}
Z^*=A^{\dag}X,
\end{eqnarray}
is the unique minimizer to problem \eqref{eq:lr:nuclear_norm_minization}, where $A^{\dag}$ is the pseudoinverse of $A$.
\end{theorem}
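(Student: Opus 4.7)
The plan is to verify feasibility of $Z^{*}=A^{\dag}X$ directly from the skinny SVD of $A$, decompose any competing feasible $Z$ as $Z^{*}+W$ with $W$ living in the null space of $A$, and then upgrade a Löwner-order inequality into a nuclear-norm inequality via operator monotonicity of the matrix square root. The uniqueness will fall out of the equality case of that monotonicity step.

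Concretely, let $A=U_{A}\Sigma_{A}V_{A}^{T}$ be the skinny SVD of $A$, so that $A^{\dag}=V_{A}\Sigma_{A}^{-1}U_{A}^{T}$, $AA^{\dag}=U_{A}U_{A}^{T}$ is the orthogonal projection onto $\spn{A}$, and $A^{\dag}A=V_{A}V_{A}^{T}$ is the orthogonal projection onto $\spn{V_{A}}$. Under the hypothesis $X\in\spn{A}$, one obtains $AZ^{*}=U_{A}U_{A}^{T}X=X$, so $Z^{*}$ is feasible. For any other feasible $Z$, the constraint $AZ=X=AZ^{*}$ forces $A(Z-Z^{*})=0$, so $W:=Z-Z^{*}$ has every column in $\ker(A)=\spn{V_{A}}^{\perp}$. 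Since $Z^{*}=V_{A}\Sigma_{A}^{-1}U_{A}^{T}X$ has every column in $\spn{V_{A}}$, the columns of $Z^{*}$ and of $W$ live in orthogonal subspaces of $\Re^{n}$; this gives $(Z^{*})^{T}W=0$, and expanding produces the key identity
\begin{equation*}
Z^{T}Z=(Z^{*})^{T}Z^{*}+W^{T}W\succeq(Z^{*})^{T}Z^{*}.
\end{equation*}

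To promote this to a nuclear-norm statement, I would write $\norm{Z}_{*}=\trace{(Z^{T}Z)^{1/2}}$ and invoke operator monotonicity of the matrix square root to conclude that $(Z^{T}Z)^{1/2}\succeq((Z^{*})^{T}Z^{*})^{1/2}$; taking trace then yields $\norm{Z}_{*}\geq\norm{Z^{*}}_{*}$, so $Z^{*}$ is a minimizer. For uniqueness, equality would force the psd matrix $(Z^{T}Z)^{1/2}-((Z^{*})^{T}Z^{*})^{1/2}$ to have zero trace, hence to vanish; squaring gives $Z^{T}Z=(Z^{*})^{T}Z^{*}$ and therefore $W^{T}W=0$, whence $W=0$ and $Z=Z^{*}$.

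The main technical point I am relying on is operator monotonicity of $t\mapsto\sqrt{t}$ on positive semidefinite matrices, a classical but non-trivial fact (a special case of Löwner's theorem). If one prefers to avoid it, essentially the same conclusion is reachable through Weyl's eigenvalue inequalities: $Z^{T}Z\succeq(Z^{*})^{T}Z^{*}$ implies $\sigma_{i}(Z)\geq\sigma_{i}(Z^{*})$ for every $i$, which immediately gives the nuclear-norm bound term by term; the equality case then cascades from equality of singular values to equality of Frobenius norms, to the equality $Z^{T}Z=(Z^{*})^{T}Z^{*}$, and finally to $W=0$.
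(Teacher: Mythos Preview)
Your proof is correct and takes a genuinely different route from the paper. The paper first establishes an auxiliary lemma: for column-orthonormal $U,V$, the problem $\min_Z\norm{Z}_*$ subject to $U^TZV=M$ has the unique minimizer $UMV^T$. It proves this via a block-matrix inequality for the nuclear norm, namely that $\norm{[B\;C;D\;F]}_*\geq\norm{B}_*$ with equality only when $C,D,F$ vanish. The constraint $X=AZ$ is then massaged, through SVDs of $[X,A]$ and of an intermediate matrix, into this canonical form, and the resulting closed form is simplified back to $A^{\dag}X$.

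Your argument bypasses all of that machinery: you decompose any feasible $Z$ as $Z^*+W$ with the columns of $Z^*$ and of $W$ lying in the complementary subspaces $\spn{V_A}$ and $\ker(A)$, obtain the L\"owner ordering $Z^TZ\succeq(Z^*)^TZ^*$, and pass to nuclear norms either via operator monotonicity of $t\mapsto\sqrt{t}$ or via Weyl's inequalities. This is shorter and more transparent, and the Weyl variant is entirely elementary. What the paper's approach buys is a reusable two-sided lemma (constraints of the form $U^TZV=M$) that it could in principle apply elsewhere; what your approach buys is a direct, self-contained argument tailored to the one-sided constraint $AZ=X$, with uniqueness falling out cleanly from $\trace{W^TW}=0$.
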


From the above theorem, we have the following corollary which shows that problem \eqref{eq:lr:nuclear_norm_minization} is a good surrogate of problem \eqref{eq:lrr:low_rank}.
\begin{corollary}\label{corollary:equal}
Assume $A\neq0$ and $X=AZ$ have feasible solutions. Let $Z^*$ be the minimizer to problem
\eqref{eq:lr:nuclear_norm_minization}, then $\rank{Z^*}=\rank{X}$ and $Z^*$ is also a minimal rank solution to
problem \eqref{eq:lrr:low_rank}.
\end{corollary}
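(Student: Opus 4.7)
The plan is to deduce this corollary directly from Theorem~\ref{theorem:unique:nonoise}, which already supplies the closed form $Z^* = A^{\dag}X$ for the unique minimizer. Everything then reduces to standard facts about pseudoinverses and rank inequalities under matrix multiplication, so no fresh variational argument is needed.

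First, I would establish the equality $\rank{Z^*}=\rank{X}$ via a two-sided inequality. For the upper bound $\rank{Z^*}\le\rank{X}$, the identity $Z^*=A^{\dag}X$ together with $\rank{MN}\le\min\{\rank{M},\rank{N}\}$ immediately gives $\rank{A^{\dag}X}\le\rank{X}$. For the reverse inequality, I would use the feasibility hypothesis $X\in\spn{A}$: this implies $AA^{\dag}X=X$, i.e., $X=AZ^*$, so $\rank{X}=\rank{AZ^*}\le\rank{Z^*}$. Combining the two bounds yields $\rank{Z^*}=\rank{X}$.

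Second, to show $Z^*$ is a minimal-rank solution to \eqref{eq:lrr:low_rank}, I would compare against an arbitrary feasible point. For any $Z$ with $X=AZ$, the same product-rank inequality gives $\rank{X}=\rank{AZ}\le\rank{Z}$. Since we have just shown $\rank{Z^*}=\rank{X}$, it follows that $\rank{Z^*}\le\rank{Z}$ for every feasible $Z$, so $Z^*$ attains the minimum of $\rank{\cdot}$ over the feasible set of \eqref{eq:lrr:low_rank}.

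There is essentially no hard step here: the only thing one has to be careful about is justifying $AA^{\dag}X=X$, which relies precisely on the assumption that the linear system $X=AZ$ is feasible (equivalently, $X\in\spn{A}$); without this hypothesis, $AA^{\dag}$ acts as the orthogonal projector onto $\spn{A}$ and the identity need not hold. Beyond that observation the argument is a two-line rank chase, so the corollary follows cleanly from Theorem~\ref{theorem:unique:nonoise}.
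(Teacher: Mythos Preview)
Your proposal is correct and follows essentially the same approach as the paper's proof. The paper condenses the rank equality into a single line (``By $X\in\spn{A}$, we have $\rank{A^{\dag}X}=\rank{X}$''), whereas you spell out the two-sided inequality explicitly via $AA^{\dag}X=X$; the argument for minimality over all feasible $Z$ is identical.
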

\subsubsection{Block-Diagonal Property of the Minimizer} By choosing an appropriate dictionary, the lowest-rank representation can reveal the true segmentation results. Namely, when the columns of $A$ and $X$ are exactly sampled from independent subspaces, the minimizer to problem \eqref{eq:lr:nuclear_norm_minization} can reveal the subspace membership among the samples. Let $\{\mathcal{S}_1,\mathcal{S}_2,\cdots,\mathcal{S}_k\}$ be a collection of $k$ subspaces, each of which has a rank (dimension) of $r_i>0$. Also, let $A=[A_1,A_2,\cdots,A_k]$ and $X=[X_1,X_2,\cdots,X_k]$. Then we have the following theorem.
\begin{theorem}\label{theorem:exact:nonoise}
Without loss of generality, assume that $A_i$ is a collection of $m_i$ samples of the $i$-th subspace $\mathcal{S}_i$, $X_i$ is a collection of $n_i$
samples from $\mathcal{S}_i$, and the sampling of each $A_i$ is sufficient such that $\rank{A_i}=r_i$ (i.e., $A_i$ can be regarded as the bases that span the subspace). If the subspaces are independent, then
the minimizer to problem \eqref{eq:lr:nuclear_norm_minization} is block-diagonal:
\begin{eqnarray*}
Z^*=\left[\begin{array}{cccc}
Z_1^*&0&0&0\\
0&Z_2^*&0&0\\
0&0&\ddots&0\\
0&0&0&Z_k^*
\end{array}\right],
\end{eqnarray*}
where $Z_i^*$ is an $m_i\times{n_i}$ coefficient matrix with $\rank{Z_i^*} = \rank{X_i}, \; \forall \, i.$
\end{theorem}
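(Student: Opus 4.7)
My plan is to exploit both the closed-form uniqueness guaranteed by Theorem \ref{theorem:unique:nonoise} and a variational argument that constructs a block-diagonal competitor. Specifically, let $Z^*$ denote the unique minimizer of \eqref{eq:lr:nuclear_norm_minization}. I partition $Z^*$ into blocks $[Z^*]_i^{(j)}\in\mathbb{R}^{m_j\times n_i}$ aligned with the column partitions of $A=[A_1,\dots,A_k]$ (rows) and $X=[X_1,\dots,X_k]$ (columns), and let $\hat{Z}$ be the block-diagonal matrix keeping only the blocks $[Z^*]_i^{(i)}$. The strategy is to show that $\hat{Z}$ is feasible and satisfies $\|\hat{Z}\|_*\leq\|Z^*\|_*$; uniqueness then forces $Z^*=\hat{Z}$.

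First I would verify feasibility $A\hat{Z}=X$, which is the heart of where subspace independence enters. For each column block $X_i$, the relation $AZ^*=X$ gives
\begin{equation*}
X_i-A_i[Z^*]_i^{(i)}=\sum_{j\neq i}A_j[Z^*]_i^{(j)}.
\end{equation*}
The left-hand side lies in $\mathcal{S}_i$ while the right-hand side lies in $\sum_{j\neq i}\mathcal{S}_j$. Independence of the subspaces means these two subspaces meet only at the origin, so both sides vanish, giving $A_i[Z^*]_i^{(i)}=X_i$ and therefore $A\hat{Z}=X$.

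Next I would establish the nuclear-norm inequality $\|\hat{Z}\|_*\leq\|Z^*\|_*$ by duality with the spectral norm. Writing $\|M\|_*=\sup_{\|W\|\leq 1}\langle W,M\rangle$, I choose $W$ to be block-diagonal with diagonal blocks $W_i\in\mathbb{R}^{m_i\times n_i}$ realizing $\|[Z^*]_i^{(i)}\|_*=\langle W_i,[Z^*]_i^{(i)}\rangle$ and $\|W_i\|\leq 1$. Since the spectral norm of a block-diagonal matrix is the maximum of the block spectral norms, $\|W\|\leq 1$, and therefore
\begin{equation*}
\|Z^*\|_*\;\geq\;\langle W,Z^*\rangle\;=\;\sum_{i}\langle W_i,[Z^*]_i^{(i)}\rangle\;=\;\sum_i\|[Z^*]_i^{(i)}\|_*\;=\;\|\hat{Z}\|_*.
\end{equation*}
Combined with feasibility and the uniqueness clause of Theorem \ref{theorem:unique:nonoise}, this forces $\hat{Z}=Z^*$; in particular every off-diagonal block vanishes, which is the block-diagonal conclusion with $Z_i^*:=[Z^*]_i^{(i)}\in\mathbb{R}^{m_i\times n_i}$.

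Finally, for the rank identity, I would chain three facts. By Corollary \ref{corollary:equal}, $\rank{Z^*}=\rank{X}$. Because columns of $X_i$ lie in $\mathcal{S}_i$ and the $\mathcal{S}_i$ are independent, the column spaces of the $X_i$'s are themselves independent, so $\rank{X}=\sum_i\rank{X_i}$. Because $Z^*$ is now block-diagonal, $\rank{Z^*}=\sum_i\rank{Z_i^*}$. And from $X_i=A_iZ_i^*$ we get the termwise inequality $\rank{X_i}\leq\rank{Z_i^*}$. Summing these inequalities yields equality throughout, hence $\rank{Z_i^*}=\rank{X_i}$ for every $i$. I expect the main subtlety to be the nuclear-norm pinching step, where it is tempting but incorrect to argue blockwise directly; the dual-norm realization of the subgradient is what makes the inequality tight for block-diagonal matrices while remaining valid for arbitrary feasible $Z$.
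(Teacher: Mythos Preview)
Your proof is correct and follows the same architecture as the paper's: form the block-diagonal restriction of $Z^*$, show it is feasible via subspace independence, show its nuclear norm does not exceed that of $Z^*$, and invoke the uniqueness in Theorem~\ref{theorem:unique:nonoise}. Two technical details differ. For the nuclear-norm inequality, the paper cites a pinching lemma (namely $\|[B,C;D,F]\|_*\geq\|B\|_*+\|F\|_*$), whereas you derive the same fact on the spot via the dual spectral-norm characterization; these are equivalent arguments. For the rank identity, the paper observes that each diagonal block $Z_i^*$ must itself be the unique minimizer of the sub-problem $\min_J\|J\|_*$ subject to $X_i=A_iJ$ (otherwise one could improve $Z^*$ blockwise, since the nuclear norm of a block-diagonal matrix is the sum of the block nuclear norms), and then applies Corollary~\ref{corollary:equal} to each sub-problem directly. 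Your global counting argument---combining $\rank{Z^*}=\rank{X}=\sum_i\rank{X_i}$ with the termwise inequalities $\rank{X_i}\leq\rank{Z_i^*}$---is a valid alternative that reaches the same conclusion without re-invoking the sub-problem optimality.
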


Note that the claim of $\rank{Z_i^*} = \rank{X_i}$ guarantees the high within-class homogeneity of $Z_i^*$, since the low-rank properties generally requires $Z_i^*$ to be dense. This is different from SR, which is prone to produce a ``trivial'' solution if $A=X$, because the sparsest representation is an identity matrix in this case. It is also worth noting that the above block-diagonal property does not require the data samples have been grouped together according to their subspace memberships. There is no loss of generality to assume that the indices of the samples have been rearranged to satisfy the true subspace memberships, because the solution produced by LRR is globally optimal and does not depend on the arrangements of the data samples.
\vspace{-0.1in}\subsection{Recovering Low-Rank Matrices by Convex Optimization}
Corollary \ref{corollary:equal} suggests that it is appropriate to use the nuclear norm as a surrogate to replace
the rank function in problem \eqref{eq:general:lrr}. Also, the matrix $\ell_1$ and $\ell_{2,1}$ norms are good relaxations of the $\ell_0$ and $\ell_{2,0}$ norms, respectively.
So we could obtain a low-rank recovery to $X_0$ by solving the following convex optimization problem:
\begin{eqnarray}\label{eq:lrr:l1_l2_regular}
\min_{Z,E} \norm{Z}_*+\lambda{\norm{E}_{2,1}}, &\textrm{s.t.} & X = AZ+E.
\end{eqnarray}
Here, the $\ell_{2,1}$ norm is adopted to characterize the error term $E$, since we want to model the sample-specific corruptions (and outliers) as shown in Fig.\ref{fig:noise}(c). For the small Gaussian noise as shown in Fig.\ref{fig:noise}(a), $\|E\|_F^2$ should be chosen; for the random corruptions as shown in Fig.\ref{fig:noise}(b), $\|E\|_1$ is an appropriate choice. After obtaining the minimizer $(Z^*,E^*)$, we could use $AZ^*$ (or $X-E^*$) to obtain a low-rank recovery to the original data $X_0$.
\begin{algorithm}[tb]
   \caption{Solving Problem \eqref{eq:lrr:l1_l2_regular} by Inexact ALM}
   \label{alg:alm}
\begin{algorithmic}
   \STATE {\bfseries Input:} data matrix $X$, parameter $\lambda$.
   \STATE {\bfseries Initialize:} $Z=J=0,E=0,Y_1=0,Y_2=0,\mu=10^{-6},\mu_{max}=10^{6},\rho=1.1$, and $\varepsilon=10^{-8}$.
   \WHILE{not converged}
   \STATE \textbf{1.} fix the others and update $J$ by $$J=\arg\min\frac{1}{\mu}||J||_*+\frac{1}{2}||J-(Z+Y_2/\mu)||_F^2.$$
   \STATE \textbf{2.} fix the others and update $Z$ by
   $$Z=(\Id+A^TA)^{-1}(A^T(X-E)+J+(A^TY_1-Y_2)/\mu).$$
   \STATE \textbf{3.} fix the others and update $E$ by $$E = \arg\min\frac{\lambda}{\mu}||E||_{2,1}+\frac{1}{2}||E-(X-AZ+Y_1/\mu)||_F^2.$$
   \STATE \textbf{4.} update the multipliers
   \begin{eqnarray*}
   Y_1 &=& Y_1 + \mu(X-AZ-E),\\
   Y_2 &=& Y_2 + \mu(Z-J).
   \end{eqnarray*}
   \STATE \textbf{5.} update the parameter $\mu$ by
   $\mu=\min(\rho\mu,\mu_{max})$.
   \STATE \textbf{6.} check the convergence conditions:
   $$||X-AZ-E||_{\infty}<\varepsilon \textrm{ and }  ||Z-J||_{\infty}<\varepsilon.$$
   \ENDWHILE
\end{algorithmic}
\end{algorithm}

The optimization problem \eqref{eq:lrr:l1_l2_regular} is convex and can be solved by various methods. For efficiency, we adopt in this paper the Augmented Lagrange Multiplier (ALM) \cite{lin_alm,alm:1980} method.  We first convert \eqref{eq:lrr:l1_l2_regular} to the following equivalent problem:
\begin{eqnarray*}
\min_{Z,E,J} \norm{J}_*+\lambda{\norm{E}_{2,1}},\textrm{ s.t. } X = AZ+E, Z=J.
\end{eqnarray*}
This problem can be solved by the ALM method, which minimizes the following augmented Lagrange function:
\begin{eqnarray*}
&\mathcal{L}=\norm{J}_*+\lambda{\norm{E}_{2,1}} + \trace{Y_1^T(X-AZ-E)}+\\
&\trace{Y_2^T(Z-J)}
+\frac{\mu}{2}(\norm{X-AZ-E}_F^2+\norm{Z-J}_F^2).
\end{eqnarray*}
The above problem is unconstrained. So it can be minimized with respect to $J$, $Z$ and $E$, respectively, by fixing the other variables, and then updating the Lagrange multipliers $Y_1$ and $Y_2$, where $\mu>0$ is a penalty parameter. The inexact ALM method, also called the alternating direction method, is outlined in Algorithm \ref{alg:alm} \footnote{To solve the problem $
\min_{Z,E} \norm{Z}_*+\lambda{\norm{E}_{1}}, \textrm{ s.t. } X = AZ+E$, one only needs to replace Step 3 of Algorithm \ref{alg:alm} by $E = \arg\min\frac{\lambda}{\mu}||E||_{1}+\frac{1}{2}||E-(X-AZ+Y_1/\mu)||_F^2$, which is solved by using the shrinkage operator \cite{lin_alm}.

Also, please note here that the setting of $\varepsilon=10^{-8}$ is based on the assumption that the values in $X$ has been normalized within the range of $0\sim1$.}. Note that although Step 1 and Step 3 of the algorithm are convex problems, they both have closed-form solutions. Step 1 is solved via the Singular Value Thresholding (SVT) operator \cite{svt:cai:2008}, while Step 3 is solved via the following lemma:
\begin{lemma}[\cite{solve_l2l1:siam:2009}]\label{lemma:solve_l2l1} Let $Q$ be a given matrix. If the optimal solution to
\begin{eqnarray*}
\min_{W} \alpha||W||_{2,1} + \frac{1}{2}||W-Q||_F^2
\end{eqnarray*}
is $W^*$, then the $i$-th column of $W^*$ is
\begin{eqnarray*}
[W^*]_{:,i}=\left\{
\begin{array}{ll} \frac{||Q_{:,i}||_2-\alpha}{||Q_{:,i}||_2}Q_{:,i}, & \mbox{if $||Q_{:,i}||_2>\alpha$};\\
0, & \mbox{otherwise.}
\end{array}\right.
\end{eqnarray*}
\end{lemma}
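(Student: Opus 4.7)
The plan is to exploit the fact that both terms of the objective are separable across columns. Since $\norm{W}_{2,1} = \sum_i \norm{[W]_{:,i}}_2$ and $\norm{W-Q}_F^2 = \sum_i \norm{[W]_{:,i}-[Q]_{:,i}}_2^2$, the matrix problem decouples into $n$ independent vector problems, one per column. So it suffices to prove that for a generic column $q$, the vector problem $\min_{w} f(w) := \alpha \norm{w}_2 + \tfrac{1}{2}\norm{w-q}_2^2$ has the unique minimizer $w^* = \max\!\bigl(0, 1 - \alpha/\norm{q}_2\bigr)\,q$, with the convention that the coefficient is $0$ when $\norm{q}_2 \le \alpha$. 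Reassembling columns then yields the matrix formula claimed.

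For the vector problem, I would first observe that $f$ is strictly convex (the quadratic term is strictly convex and the norm is convex), so any minimizer is unique, and then apply the first-order optimality condition $0 \in \partial f(w^*)$, i.e.\ $q - w^* \in \alpha\, \partial \norm{w^*}_2$. The subdifferential of the Euclidean norm is $\partial \norm{w}_2 = \{ w/\norm{w}_2 \}$ for $w \neq 0$ and $\partial \norm{0}_2 = \{ v : \norm{v}_2 \leq 1 \}$, which is the key nonsmooth ingredient.

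Next, I would split into two cases. If $\norm{q}_2 \le \alpha$, I would verify that $w^* = 0$ satisfies the optimality condition, since $q \in \{v : \norm{v}_2 \leq \alpha\}$ holds by assumption. If $\norm{q}_2 > \alpha$, I would use the ansatz $w^* = c\,q$ for some $c > 0$; this is legitimate because once we know $w^* \neq 0$, the optimality condition forces $w^*/\norm{w^*}_2 = q/\norm{q}_2$ (alternatively, decomposing any candidate $w$ into components parallel and orthogonal to $q$ shows the orthogonal piece strictly increases both terms of $f$). Substituting the ansatz into $q - cq = \alpha \, q/\norm{q}_2$ gives $c = 1 - \alpha/\norm{q}_2 = (\norm{q}_2 - \alpha)/\norm{q}_2$, which is positive precisely when $\norm{q}_2 > \alpha$, matching the stated formula.

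The main obstacle is minor and essentially bookkeeping: one must handle the subdifferential of $\norm{\cdot}_2$ at the origin carefully to justify the thresholding behavior, and must argue that in the non-thresholded case the minimizer is parallel to $q$. Both are standard subgradient computations; no deeper tool is required, and the uniqueness claim comes for free from strict convexity.
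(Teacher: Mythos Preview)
Your argument is correct and is the standard derivation of the group soft-thresholding operator: column separability reduces to a one-dimensional vector problem, strict convexity gives uniqueness, and the subdifferential of $\norm{\cdot}_2$ at zero versus away from zero drives the two-case analysis. There is nothing to fix.

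As for comparison with the paper: the paper does not prove this lemma at all. It is stated with a citation to an external reference and used as a black box inside the ALM algorithm, so there is no ``paper's own proof'' to compare against. Your write-up would serve perfectly well as a self-contained justification should one wish to include it.
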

\subsubsection{Convergence Properties} When the objective function is smooth, the convergence of the exact ALM algorithm has been generally proven in \cite{alm:1980}. For inexact ALM, which is a variation of exact ALM, its convergence has also been well studied when the number of blocks is at most two \cite{lin_alm,zhang:2010:adm}. Up to present, it is still difficult to \emph{generally} ensure the convergence of inexact ALM with three or more blocks \cite{zhang:2010:adm}. Since there are three blocks (including $Z,J$ and $E$) in Algorithm \ref{alg:alm} and the objective function of \eqref{eq:lrr:l1_l2_regular} is not smooth, it would be not easy to prove the convergence in theory.

Fortunately, there actually exist some guarantees for ensuring the convergence of Algorithm \ref{alg:alm}. According to the theoretical results in \cite{Eckstein:1992:DSM}, two conditions are \emph{sufficient} (but may not necessary) for Algorithm \ref{alg:alm} to converge: the first condition is that the dictionary matrix $A$ is of full column rank; the second one is that the optimality gap produced in each iteration step is monotonically decreasing, namely the error
\begin{eqnarray*}
\epsilon_k=\|(Z_k,J_k)-\arg\min_{Z,J}\mathcal{L}\|_F^2
\end{eqnarray*}
is monotonically decreasing, where $Z_k$ (resp. $J_k$) denotes the solution produced at the $k$-th iteration, $\arg\min_{Z,J}\mathcal{L}$ indicates the ``ideal'' solution obtained by minimizing the Lagrange function $\mathcal{L}$ with respect to both $Z$ and $J$ simultaneously. The first condition is easy to obey, since problem \eqref{eq:lrr:l1_l2_regular} can be converted into an equivalent problem where the full column rank condition is always satisfied (we will show this in the next subsection). For the monotonically decreasing condition, although it is not easy to \emph{strictly} prove it, the convexity of the Lagrange function could guarantee its validity to some extent \cite{Eckstein:1992:DSM}. So, it could be well expected that Algorithm \ref{alg:alm} has good convergence properties. Moreover, inexact ALM is known to \emph{generally} perform well in reality, as illustrated in \cite{zhang:2010:adm}.

That $\mu$ should be upper bounded (Step 5 of Algorithm \ref{alg:alm}) is required by the traditional theory of the alternating direction method in order to guarantee the convergence of the algorithm. So we also adopt this convention. Nevertheless, please note that the upper boundedness may not be necessary for some particular problems, e.g., the RPCA problem as analyzed in \cite{lin_alm}.
\subsubsection{Computational Complexity}\label{sec:complexity} For ease of analysis, we assume that the sizes of both $A$ and $X$ are $d\times{}n$ in the following. The major computation of Algorithm \ref{alg:alm} is Step 1, which requires computing the SVD of an $n\times{}n$ matrix. So it will be time consuming if $n$ is large, i.e., the number of data samples is large. Fortunately, the computational cost of LRR can be easily reduced by the following theorem, which is followed from Theorem \ref{theorem:unique:nonoise}.
\begin{theorem}\label{the:lrr:solution}
For any optimal solution $(Z^*,E^*)$ to the LRR problem \eqref{eq:lrr:l1_l2_regular}, we have that $$Z^*\in{}\spn{A^T}.$$
\end{theorem}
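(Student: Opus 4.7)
The plan is to reduce the claim to Theorem~\ref{theorem:unique:nonoise} by fixing $E^*$ and examining the partial minimization in $Z$ alone. Given any optimal pair $(Z^*, E^*)$ of problem~\eqref{eq:lrr:l1_l2_regular}, the feasibility constraint reads $X - E^* = A Z^*$, so $Z^*$ is feasible for the ``clean'' subproblem
\begin{eqnarray*}
\min_{Z} \norm{Z}_{*}, \quad \textrm{s.t.}\ A Z = X - E^*.
\end{eqnarray*}
Because the $\lambda\norm{E}_{2,1}$ term depends only on $E$, if there were some $\widetilde Z$ with $A\widetilde Z = X - E^*$ and $\|\widetilde Z\|_* < \|Z^*\|_*$, then the pair $(\widetilde Z, E^*)$ would be feasible for~\eqref{eq:lrr:l1_l2_regular} with a strictly smaller objective value, contradicting the optimality of $(Z^*, E^*)$. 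Hence $Z^*$ must itself be a minimizer of the above clean problem.

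Now I would invoke Theorem~\ref{theorem:unique:nonoise} with the data matrix taken to be $X - E^*$. The feasibility hypothesis $X - E^* \in \spn{A}$ is automatic from $X - E^* = AZ^*$, and $A\neq 0$ is assumed throughout. The theorem therefore forces
\begin{eqnarray*}
Z^* = A^{\dag}(X - E^*).
\end{eqnarray*}
It remains to observe that every column of $A^{\dag} y$ lies in $\spn{A^T}$: writing the skinny SVD $A = U_A \Sigma_A V_A^T$, one has $A^{\dag} = V_A \Sigma_A^{-1} U_A^T$, so the range of $A^{\dag}$ equals the column space of $V_A$, which is precisely the row space of $A$, i.e., $\spn{A^T}$. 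Applying this column-by-column to $X - E^*$ yields $Z^* \in \spn{A^T}$, as required.

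I do not expect any real obstacle: the argument is a two-line reduction plus one invocation of Theorem~\ref{theorem:unique:nonoise}. The only subtle point worth making explicit in the write-up is that the uniqueness clause of Theorem~\ref{theorem:unique:nonoise} is not actually needed here; what one uses is merely that \emph{every} minimizer of the $E^*$-fixed subproblem has the closed form $A^{\dag}(X - E^*)$, and hence lies in $\spn{A^T}$. If desired, one can also remark as a corollary that the original three-block optimization in~\eqref{eq:lrr:l1_l2_regular} may be replaced by an equivalent problem in which the dictionary has full column rank (by substituting $A$ with an orthonormal basis $V_A$ of its row space), which is the full-column-rank reformulation alluded to earlier in the convergence discussion.
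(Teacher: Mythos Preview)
Your proposal is correct and follows essentially the same route as the paper: fix $E^*$, argue that $Z^*$ must minimize $\|Z\|_*$ subject to $AZ = X - E^*$, invoke Theorem~\ref{theorem:unique:nonoise} to obtain $Z^* = A^{\dag}(X-E^*)$, and read off $Z^*\in\spn{A^T}$ from the range of $A^{\dag}$. The paper's version is simply more terse; your added remarks on the contradiction argument and the full-column-rank reformulation are fine elaborations but not departures.
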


The above theorem concludes that the optimal solution $Z^*$ (with respect to the variable $Z$) to \eqref{eq:lrr:l1_l2_regular} always lies within the subspace spanned by the rows of $A$. This means that $Z^*$ can be factorized into $Z^* = P^*\tilde{Z}^*$, where $P^*$ can be computed in advance by orthogonalizing the columns of $A^T$. Hence, problem \eqref{eq:lrr:l1_l2_regular} can be equivalently transformed into a simpler problem by replacing $Z$ with $P^*\tilde{Z}$:
\begin{eqnarray*}
\min_{\tilde{Z},E}\|\tilde{Z}\|_*+\lambda\norm{E}_{2,1},\textrm{ s.t. }X=B\tilde{Z}+E,
\end{eqnarray*}
where $B=AP^*$. After obtaining a solution $(\tilde{Z}^*,E^*)$ to the above problem, the optimal solution to \eqref{eq:lrr:l1_l2_regular} is recovered by $(P^*\tilde{Z}^*,E^*)$. Since the number of rows of $\tilde{Z}$ is at most $r_A$ (the rank of $A$), the above problem can be solved with a complexity of $O(dnr_A+nr_A^2+r_A^3)$ by using Algorithm \ref{alg:alm}. So LRR is quite scalable for large-size ($n$ is large) datasets, provided that a low-rank dictionary $A$ has been obtained. While using $A=X$, the computational complexity is at most $O(d^2n+d^3)$ (assuming $d\leq{}n$). This is also fast provided that the data dimension $d$ is not high.

While considering the cost of orthogonalization and the number of iterations needed to converge, the complexity of Algorithm \ref{alg:alm} is $$O(d^2n)+O(n_s(dnr_A+nr_A^2+r_A^3)),$$ where $n_s$ is the number of iterations. The iteration number $n_s$ depends on the choice of $\rho$: $n_s$ is smaller while $\rho$ is larger, and vice versa. Although larger $\rho$ does produce higher efficiency, it has the risk of losing optimality to use large $\rho$ \cite{lin_alm}. In our experiments, we always set $\rho=1.1$. Under this setting, the iteration number usually locates within the range of $50\sim300$.
\vspace{-0.15in}\section{Subspace Clustering by LRR}\label{sec:subrecovery}
In this section, we utilize LRR to address Problem \ref{pb:sp_recover_precise}, which is to recover the original row space from a set of corrupted observations. Both theoretical and experimental results will be presented.
\vspace{-0.1in}\subsection{Exactness to Clean Data}
When there are no errors in data, i.e., $X=X_0$ and $E_0=0$, it is simple to show that the row space (identified by $V_0V_0^T$) of $X_0$ is exactly recovered by solving the following nuclear norm minimization problem:
\begin{eqnarray}\label{eq:lrrx:assu1}
\min_{Z}\norm{Z}_*,&\textrm{s.t.}&X=XZ,
\end{eqnarray}
which is to choose the data matrix $X$ itself as the dictionary in \eqref{eq:lr:nuclear_norm_minization}. By Theorem \ref{theorem:unique:nonoise}, we have the following theorem which has also been proven by Wei and Lin \cite{rsi:tsp}.
\begin{theorem}
Suppose the skinny SVD of $X$ is $U\Sigma{}V^T$, then the minimizer to problem \eqref{eq:lrrx:assu1} is uniquely defined by $$Z^*=VV^T.$$ This naturally implies that $Z^*$ exactly recovers $V_0V_0^T$ when $X$ is clean (i.e., $E_0=0$).
\end{theorem}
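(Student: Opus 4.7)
The plan is to apply Theorem \ref{theorem:unique:nonoise} with the choice of dictionary $A=X$, and then compute the resulting closed form $A^\dag X = X^\dag X$ explicitly in terms of the skinny SVD.

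First I would verify the hypotheses of Theorem \ref{theorem:unique:nonoise}. The constraint $X=XZ$ is feasible because $Z=\Id$ is a trivial solution (so $X\in\spn{X}$ obviously holds), and we may assume $X\neq0$ since otherwise the statement is vacuous with $V$ empty. Invoking Theorem \ref{theorem:unique:nonoise} with $A=X$ gives that the unique minimizer of \eqref{eq:lrrx:assu1} is $Z^* = X^\dag X$.

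Next I would substitute the skinny SVD $X=U\Sigma V^T$ into $X^\dag X$. Because the SVD is skinny, the columns of $U$ and $V$ are orthonormal so $U^TU=\Id$, and $\Sigma$ is square and invertible (its diagonal entries are the nonzero singular values). Thus the pseudoinverse is $X^\dag = V\Sigma^{-1}U^T$, and
\begin{equation*}
Z^* = X^\dag X = V\Sigma^{-1}U^T U\Sigma V^T = V\Sigma^{-1}\Sigma V^T = VV^T,
\end{equation*}
which is the claimed closed form.

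Finally, for the second assertion, when the data is clean we have $X=X_0$, so the skinny SVD of $X$ coincides with the skinny SVD $U_0\Sigma_0 V_0^T$ of $X_0$. Applying the formula just derived gives $Z^* = V_0V_0^T$, recovering the true SIM exactly. No real obstacle arises here: the work is entirely a specialization of the previously established Theorem \ref{theorem:unique:nonoise}, and the only subtlety is being careful that ``skinny SVD'' ensures $U^TU=\Id$ and $\Sigma$ is invertible so that the pseudoinverse admits the simple factored form used above.
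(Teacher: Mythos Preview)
Your proposal is correct and matches the paper's approach exactly: the paper states this theorem as an immediate consequence of Theorem~\ref{theorem:unique:nonoise} (with $A=X$) and does not write out a separate proof, so your computation of $X^\dag X = V\Sigma^{-1}U^TU\Sigma V^T = VV^T$ is precisely the missing detail the paper leaves implicit.
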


The above theorem reveals the connection between LRR and the method in \cite{ijcv_1998_factor}, which is a counterpart of PCA (referred to as ``PCA'' for simplicity). Nevertheless, it is well known that PCA is fragile to the presence of outliers. In contrast, it can be proven in theory that LRR exactly recovers the row space of $X_0$ from the data contaminated by outliers, as will be shown in the next subsection.
\vspace{-0.1in}\subsection{Robustness to Outliers and Sample-Specific Corruptions}\label{sec:assume2}
Assumption 2 is to imagine that a fraction of the data samples are away from the underlying subspaces. This implies that the error term $E_0$ has sparse column supports. So, the $\ell_{2,1}$ norm is appropriate for characterizing $E_0$. By choosing $A=X$ in \eqref{eq:lrr:l1_l2_regular}, we have the following convex optimization problem:
\begin{eqnarray}\label{eq:lrrx:assu2}
\min_{Z,E} ||Z||_*+\lambda{||E||_{2,1}}, & \textrm{s.t.} & X = XZ+E.
\end{eqnarray}

The above formulation ``seems'' questionable, because the data matrix (which itself can contain errors) is used as the dictionary for error correction. Nevertheless, as shown in the following two subsections, $A=X$ is indeed a good choice for several particular problems \footnote{Note that this does not deny the importance of learning the dictionary. Indeed, the choice of dictionary is a very important aspect in LRR. We leave this as future work.}.
\subsubsection{Exactness to Outliers}
\begin{figure}
\begin{center}
\includegraphics[width=0.4\textwidth]{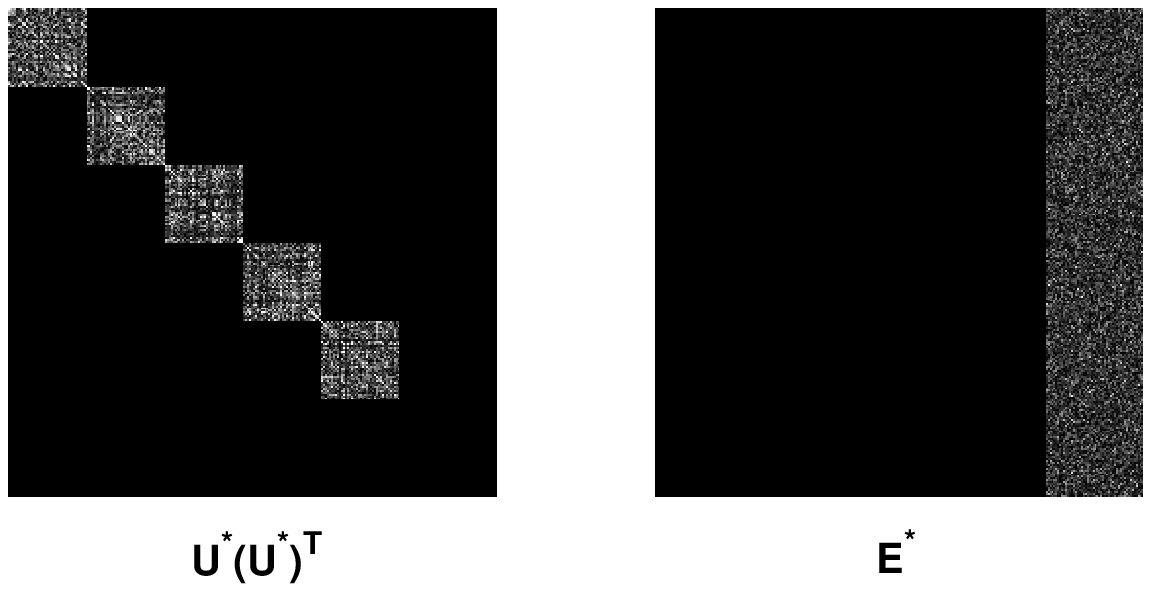}
\caption{\textbf{An example of the matrices $U^*(U^*)^T$ and $E^*$ computed from the data contaminated by outliers.} In a similar way as \cite{icml_2010_lrr}, we create 5 pairwise disjoint subspaces each of which is of dimension 4, and draw 40 samples (with ambient dimension 200) from each subspace. Then, 50 outliers are randomly generated from $\mathcal{N}(0,s)$, where the standard deviation $s$ is set to be three times as large as the averaged magnitudes of the samples. By choosing $0.16\leq\lambda\leq0.34$, LRR produces a solution $(Z^*,E^*)$ with the column space of $Z^*$ exactly recovering the row space of $X_0$, and the column supports of $E^*$ exactly identifying the indices of the outliers.}\label{fig:sim:outlier}\vspace{-0.25in}
\end{center}
\end{figure}
When an observed data sample is far away from the underlying subspaces, a typical regime is that this sample is from a different model other than subspaces, so called as an \emph{outlier} \footnote{Precisely, we define an outlier as a data vector that is independent to the samples drawn from the subspaces \cite{liu:2011:nips}.}. In this case, the data matrix $X$ contains two parts, one part consists of authentic samples (denoted by $X_0$) strictly drawn from the underlying subspaces, and the other part consists of outliers (denoted as $E_0$) that are not subspace members. To precisely describe this setting, we need to impose an additional constraint on $X_0$, that is,
\begin{eqnarray}\label{eq:settings:outlier}
\mathcal{P}_{\mathcal{I}_0}(X_0)=0,
\end{eqnarray}
where $\mathcal{I}_0$ is the indices of the outliers (i.e., the column supports of $E_0$). Furthermore, we use $n$ to denote the total number of data samples in $X$, $\gamma\triangleq{}|\mathcal{I}_0|/n$ the fraction of outliers, and $r_0$ the rank of $X_0$. With these notations, we have the following theorem which states that LRR can exactly recover the row space of $X_0$ and identify the indices of outliers as well.
\begin{theorem}[\cite{liu:2011:nips}]\label{theorem:lrr:outlier}
There exists $\gamma^*>0$ such that LRR with parameter $\lambda=3/(7\|X\|\sqrt{\gamma^*n})$ strictly succeeds, as long as $\gamma\leq\gamma^*$. Here, the success is in a sense that any minimizer $(Z^*,E^*)$ to \eqref{eq:lrrx:assu2} can produce
\begin{eqnarray}\label{eq:exact}
U^*(U^*)^T=V_0V_0^T&\textrm{and}&\mathcal{I}^*=\mathcal{I}_0,
\end{eqnarray}
where $U^*$ is the column space of $Z^*$, and $\mathcal{I}^*$ is column supports of $E^*$.
\end{theorem}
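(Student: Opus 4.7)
The plan is to show that $(Z^*, E^*) = (V_0 V_0^T,\, E_0)$ is optimal for \eqref{eq:lrrx:assu2} and that every minimizer must share its column space and its column supports, via a primal--dual (KKT / dual-certificate) argument. Feasibility is immediate: the structural assumption \eqref{eq:settings:outlier} together with the skinny SVD $X_0 = U_0\Sigma_0 V_0^T$ forces the rows of $V_0$ indexed by $\mathcal{I}_0$ to vanish (a zero column of $X_0$ produces a zero row of $V_0$ since $U_0\Sigma_0$ has full column rank). Consequently $E_0 V_0 V_0^T = 0$ and $X V_0 V_0^T = X_0 V_0 V_0^T = X_0 = X - E_0$. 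For this candidate the desired conclusions are tautological, because the column space of $V_0 V_0^T$ equals $\spn{V_0}$ and the column supports of $E_0$ are $\mathcal{I}_0$ by construction.

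Next, I would characterize optimality via subdifferentials. From the Lagrangian $\mathcal{L} = \norm{Z}_* + \lambda\norm{E}_{2,1} + \langle Y, X - XZ - E\rangle$, KKT requires a dual variable $Y$ with $X^T Y \in \partial\norm{Z^*}_*$ and $Y \in \lambda\,\partial\norm{E^*}_{2,1}$. Using the skinny SVD $V_0 \cdot \Id \cdot V_0^T$ of $Z^*$, the first inclusion reads $X^T Y = V_0 V_0^T + W$ with $V_0^T W = 0$, $W V_0 = 0$, $\norm{W} \leq 1$. The second reads $\mathcal{P}_{\mathcal{I}_0}(Y) = \lambda H$, where $[H]_{:,i} = [E_0]_{:,i}/\norm{[E_0]_{:,i}}_2$ for $i \in \mathcal{I}_0$, together with $\norm{[Y]_{:,i}}_2 \leq \lambda$ for $i \in \mathcal{I}_0^c$. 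Promoting these to the strict inequalities $\norm{W} < 1$ and $\norm{[Y]_{:,i}}_2 < \lambda$ on $\mathcal{I}_0^c$ will pin down the column space of any minimizing $Z^*$ and the column supports of any minimizing $E^*$, which is exactly the success conclusion \eqref{eq:exact}.

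Third, I would construct $Y$ explicitly. Write $Y = \lambda H + G$ with $\mathcal{P}_{\mathcal{I}_0}(G) = 0$, reducing the task to finding $G$ supported on $\mathcal{I}_0^c$ so that $X^T G$ cancels the off-tangent part of $\lambda X^T H$ while placing $V_0 V_0^T$ in the tangent direction, subject to $\norm{[G]_{:,i}}_2 < \lambda$ for every $i \in \mathcal{I}_0^c$. Using the identity $V_0^T E_0^T = 0$ (from the clean-row structure of $V_0$) one can solve the tangent equation in closed form on the clean columns via a least-squares formula involving $U_0$ and $\Sigma_0^{-1}$. The choice $\lambda = 3/(7\norm{X}\sqrt{\gamma^* n})$ is precisely the calibration that makes the tangent equation yield column-norms of order $\lambda$, while the off-tangent spectral residual scales like $\lambda\norm{X}\sqrt{\gamma n}$; demanding that both bounds are strictly less than one yields a condition in $\gamma$ whose feasibility region defines $\gamma^* > 0$.

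The main obstacle I anticipate is controlling, simultaneously, the spectral norm of the off-tangent perturbation induced by the outlier columns sitting inside the dictionary $A = X$ and the column-wise $\ell_2$ norms of $G$. Because the outliers live in the dictionary itself, a self-interaction term of the form $E_0^T H$ appears in $X^T Y$, and it must be controlled uniformly in the admissible sign patterns of $H$; a worst-case bound of the type $\norm{E_0^T H} \leq \norm{X}\sqrt{|\mathcal{I}_0|}$ is what forces the $\sqrt{\gamma n}$ factor in $\lambda$. Once the strict certificate is in hand, standard convex-duality arguments promote it to uniqueness of the column space of $Z^*$ and of the column supports of $E^*$; the identity $E^* = X - X Z^* = E_0$ on $\mathcal{I}_0$ then follows once the column space of $Z^*$ is fixed to $\spn{V_0}$, completing the proof.
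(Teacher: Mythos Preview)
The paper does not contain a proof of this theorem. It is stated with an explicit citation to \cite{liu:2011:nips} and the reader is referred there for details (``For more details, please refer to \cite{liu:2011:nips}''); the Appendix of the present paper proves Theorems~4.1, 4.2, 4.3, 5.3 and Corollary~4.1, but not Theorem~\ref{theorem:lrr:outlier}. So there is no in-paper argument to compare your proposal against.

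That said, your outline follows the standard dual-certificate template used for this class of results (outlier pursuit, RPCA, and the cited LRR analysis), and the feasibility computation for the candidate $(V_0V_0^T,E_0)$ is correct: $\mathcal{P}_{\mathcal{I}_0}(X_0)=0$ indeed forces the $\mathcal{I}_0$-indexed rows of $V_0$ to vanish, whence $E_0V_0=0$ and $XV_0V_0^T=X_0$. Two places deserve more care if you carry this through. First, the theorem asserts only that every minimizer has the correct \emph{column space} and \emph{column supports}, not that $(V_0V_0^T,E_0)$ is the unique minimizer; your strict-certificate step must therefore be phrased so that the conclusion is invariance of $\spn{Z^*}$ and of $\mathcal{I}^*$ across the optimal face, rather than pointwise uniqueness of $(Z^*,E^*)$. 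Second, the construction you sketch for $G$ (``least-squares formula involving $U_0$ and $\Sigma_0^{-1}$'') and the ensuing norm bounds are precisely where the incoherence hypotheses on $X_0$ and separation hypotheses on $E_0$ enter, and where the value of $\gamma^*$ is actually determined; in the cited work this step is not a one-line calculation but requires controlling both a spectral term and a maximum-column-$\ell_2$ term simultaneously, which is where most of the technical effort lies. Your identification of the self-interaction term $E_0^TH$ as the source of the $\sqrt{\gamma n}$ scaling is on the right track, but turning that heuristic into the precise constant $3/7$ and an explicit $\gamma^*$ requires the full argument of the cited reference.
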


There are several importance notices in the above theorem. First, although the objective function \eqref{eq:lrrx:assu2} is not strongly convex and multiple minimizers may exist, it is proven that \emph{any} minimizer is effective for subspace clustering. Second, the coefficient matrix $Z^*$ itself does not recover $V_0V_0^T$ (notice that $Z^*$ is usually asymmetric except $E^*=0$), and it is the column space of $Z^*$ that recovers the row space of $X_0$. Third, the performance of LRR is measured by the value of $\gamma^*$ (the larger, the better), which depends on some data properties such as the incoherence and the extrinsic rank $r_0$ ($\gamma^*$ is larger when $r_0$ is lower). For more details, please refer to \cite{liu:2011:nips}.

Fig.\ref{fig:sim:outlier} shows some experimental results, which verify the conclusions of Theorem \ref{theorem:lrr:outlier}. Notice that the parameter setting $\lambda=3/(7\|X\|\sqrt{\gamma^*n})$ is based on the condition $\gamma\leq\gamma^*$ (i.e., the outlier fraction is smaller than a certain threshold), which is just a sufficient (but not necessary) condition for ensuring the success of LRR. So, in practice (even for synthetic examples) where $\gamma>\gamma^*$, it is possible that other values of $\lambda$ achieve better performances.
\subsubsection{Robustness to Sample-Specific Corruptions}
\begin{figure}
\begin{center}
\includegraphics[width=0.4\textwidth]{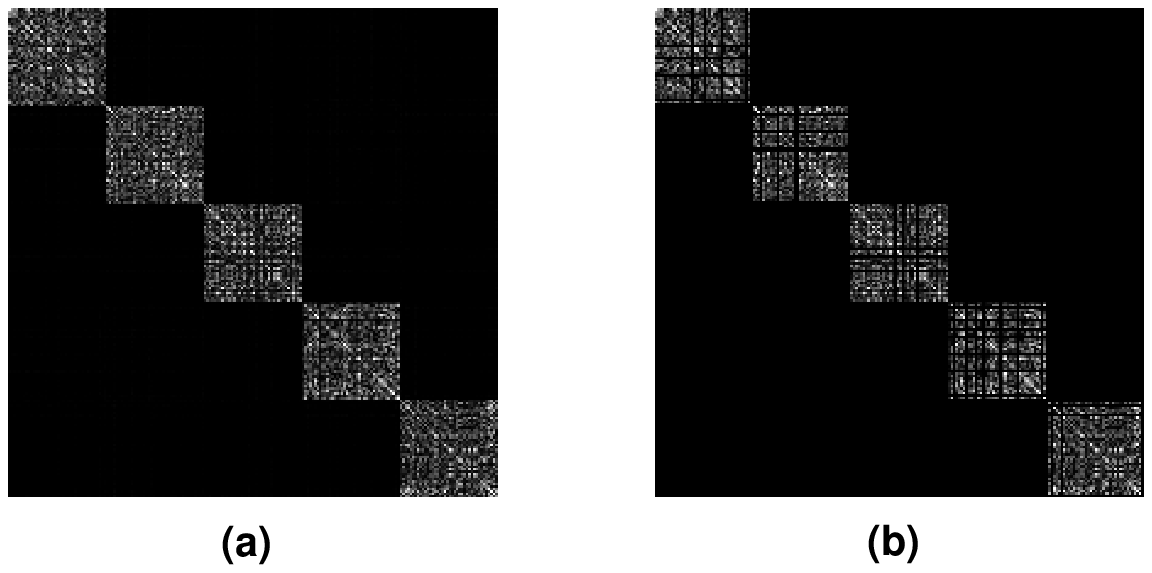}
\caption{\textbf{Two examples of the matrix $U^*(U^*)^T$ computed from the data corrupted by sample-specific corruptions.} (a) The magnitudes of the corruptions are set to be about 0.7 times as large as the samples. Considering $|U^*(U^*)^T|$ as an affinity matrix, the average affinity degree of the corrupted samples is about $40$, which means that the corrupted samples can be projected back onto their respective subspaces. (b) The magnitudes of the corruptions are set to be about 3.5 times as large as the samples. The affinity degrees of the corrupted samples are all zero, which means that the corrupted samples are treated as outliers. In these experiments, the data samples are generated in the same way as in Fig.\ref{fig:sim:outlier}. Then, 10\% samples are randomly chosen to be corrupted by additive errors of Gaussian distribution. For each experiment, the parameter $\lambda$ is carefully determined such that the column supports of $E^*$ identify the indices of the corrupted samples.}\label{fig:sim2}\vspace{-0.25in}
\end{center}
\end{figure}
For the phenomenon that an observed sample is away from the subspaces, another regime is that this sample is an authentic subspace member, but grossly corrupted. Usually, such corruptions only happen on a small fraction of data samples, so called as ``sample-specific'' corruptions. The modeling of sample-specific corruptions is the same as outliers, because in both cases $E_0$ has sparse column supports. So the formulation \eqref{eq:lrrx:assu2} is still applicable. However, the setting \eqref{eq:settings:outlier} is no longer valid, and thus LRR may not exactly recover the row space $V_0V_0^T$ in this case. Empirically, the conclusion of $\mathcal{I}^*=\mathcal{I}_0$ still holds \cite{icml_2010_lrr}, which means that the column supports of $E^*$ can identify the indices of the corrupted samples.

While both outliers and sample-specific corruptions \footnote{Unlike outlier, a corrupted sample is unnecessary to be independent to the clean samples.} are handled in the same way, a question is how to deal with the cases where the authentic samples are heavily corrupted to have similar properties as the outliers. If a sample is heavily corrupted so as to be independent from the underlying subspaces, it will be treated as an outlier in LRR, as illustrated in Fig.\ref{fig:sim2}. This is a reasonable manipulation. For example, it is appropriate to treat a face image as a non-face outlier if the image has been corrupted to be look like something else.
\vspace{-0.1in}\subsection{Robustness in the Presence of Noise, Outliers and Sample-Specific Corruptions}
\begin{figure}
\begin{center}
\includegraphics[width=0.4\textwidth]{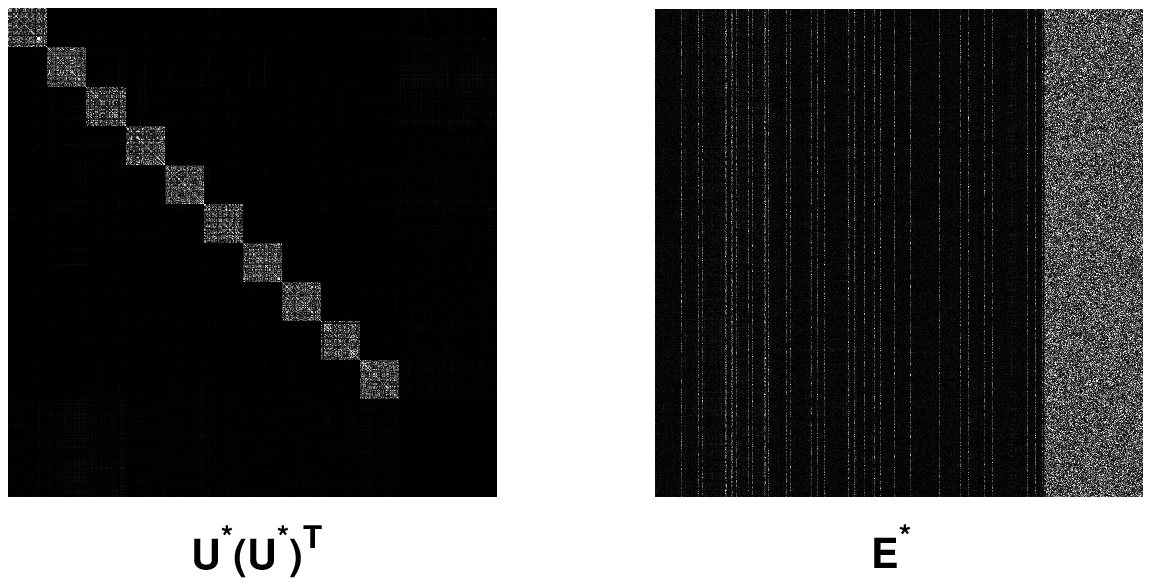}
\caption{\textbf{An example of the matrices $U^*(U^*)^T$ and $E^*$ computed from the data contaminated by noise, outliers and sample-specific corruptions.} In this experiment, first, we create 10 pairwise disjoint subspaces (each of which is of dimension 4) and draw 40 samples (with ambient dimension 2000) from each subspace. Second, we randomly choose 10\% samples to be grossly corrupted by large errors. The rest 90\% samples are slightly corrupted by small errors. Finally, as in Fig.\ref{fig:sim:outlier}, 100 outliers are randomly generated. The total amount of errors (including noise, sample-specific corruptions and outliers) is given by $\|E_0\|_F/\|X_0\|_F=0.63$. By setting $\lambda=0.3$, $U^*(U^*)^T$ approximately recovers $V_0V_0^T$ with error $\|U^*(U^*)^T-V_0V_0^T\|_F/\|V_0V_0^T\|_F=0.17$, and the column supports of $E^*$ accurately identify the indices of the outliers and corrupted samples. In contrast, the recover error produced by PCA is 0.66, and that by the RPCA method (using the best parameters) introduced in \cite{xu:2010:nips} is 0.23.}\label{fig:sim3}\vspace{-0.25in}
\end{center}
\end{figure}
When there is noise in the data, the column supports of $E_0$ are not strictly sparse. Nevertheless, the formulation \eqref{eq:lrrx:assu2} is still applicable, because the $\ell_{2,1}$ norm (which is relaxed from $\ell_{2,0}$ norm) can handle well the signals that approximately have sparse column supports. Since all observations may be contaminated, it is unlikely in theory that the row space $V_0V_0^T$ can be exactly recovered. So we target on near recovery in this case. By the triangle inequality of matrix norms, the following theorem can be simply proven without any assumptions.
\begin{theorem}\label{the:lrr:robustness:general}
Let the size of $X$ be $d\times{}n$, and the rank of $X_0$ be $r_0$. For any minimizer $(Z^*,E^*)$ to problem \eqref{eq:lrrx:assu2} with $\lambda>0$, we have
$$\|Z^*-V_0V_0^T\|_F\leq{}\min(d,n)+r_0.$$
\end{theorem}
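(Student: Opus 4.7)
The plan is to combine the triangle inequality for the Frobenius norm with a comparison between a minimizer and a cheap feasible point, then use $\|\cdot\|_F \leq \|\cdot\|_*$ to convert nuclear-norm control into Frobenius control. No assumptions on incoherence, outlier fraction, or the shape of $E_0$ should be needed.

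First I would construct a convenient feasible point for problem \eqref{eq:lrrx:assu2}. Let $X = U\Sigma V^T$ be the skinny SVD of $X$ (so $V^TV=\Id$). Then $XVV^T = U\Sigma V^T V V^T = U\Sigma V^T = X$, hence $(Z,E) = (VV^T, 0)$ is feasible. The objective value at this point is simply $\|VV^T\|_* = \rank{X}$, since $VV^T$ is an orthogonal projector of rank $\rank{X}$, so its singular values are $\rank{X}$ ones and then zeros. Because $\rank{X}\leq\min(d,n)$, the optimal value of \eqref{eq:lrrx:assu2} is at most $\min(d,n)$.

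Next I would exploit optimality of $(Z^*,E^*)$. Since $\lambda>0$ and $\|E^*\|_{2,1}\geq 0$, the minimality gives
\begin{equation*}
\|Z^*\|_* \;\leq\; \|Z^*\|_* + \lambda\|E^*\|_{2,1} \;\leq\; \min(d,n).
\end{equation*}
Separately, because $V_0$ has orthonormal columns and $\rank{X_0}=r_0$, the matrix $V_0V_0^T$ is a rank-$r_0$ projector with $\|V_0V_0^T\|_* = r_0$.

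Finally, I would chain the triangle inequality with the standard bound $\|M\|_F \leq \|M\|_*$ (valid for any matrix, since $\sqrt{\sum\sigma_i^2}\leq\sum\sigma_i$):
\begin{equation*}
\|Z^* - V_0V_0^T\|_F \;\leq\; \|Z^*\|_F + \|V_0V_0^T\|_F \;\leq\; \|Z^*\|_* + \|V_0V_0^T\|_* \;\leq\; \min(d,n) + r_0.
\end{equation*}
There is essentially no hard step here; the only thing to be careful about is verifying that $(VV^T,0)$ is genuinely feasible and that the chosen bound on $\|V_0V_0^T\|_*$ uses the right rank. The statement is intentionally loose, serving merely as a sanity-check worst-case guarantee before the later, sharper robustness results.
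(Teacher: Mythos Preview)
Your proof is correct and essentially identical to the paper's: both construct the feasible point $(VV^T,0)$ from the skinny SVD of $X$, use optimality to get $\|Z^*\|_*\leq\rank{X}\leq\min(d,n)$, and then combine the triangle inequality with $\|\cdot\|_F\leq\|\cdot\|_*$. The only cosmetic difference is that the paper applies $\|\cdot\|_F\leq\|\cdot\|_*$ to $Z^*-V_0V_0^T$ before the triangle inequality, whereas you apply the triangle inequality in the Frobenius norm first and then bound each term---both routes are valid and yield the same estimate.
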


Fig.\ref{fig:sim3} demonstrates the performance of LRR, in the presence of noise, outliers and sample-specific corruptions. It can be seen that the results produced by LRR are quite promising.

One may have noticed that the bound given in above theorem is somewhat loose. To obtain a more accurate bound in theory, one needs to relax the equality constraint of \eqref{eq:lrrx:assu2} into:
\begin{eqnarray*}
\min_{Z,E} ||Z||_*+\lambda{||E||_{2,1}}, \textrm{s.t. } \|X - XZ - E\|_F\leq{}\xi,
\end{eqnarray*}
where $\xi$ is a parameter for characterizing the amount of the dense noise (Fig.\ref{fig:noise}(a)) possibly existing in data. The above problem can be solved by ALM, in a similar procedure as Algorithm \ref{alg:alm}. However, the above formulation needs to invoke another parameter $\xi$, and thus we do not further explore it in this paper.
\vspace{-0.1in}\subsection{Algorithms for Subspace Segmentation, Model Estimation and Outlier Detection}
\subsubsection{Segmentation with Given Subspace Number}
\begin{algorithm}[tb]
   \caption{Subspace Segmentation}\label{alg:segmentation}
   \label{alg:lrr}
\begin{algorithmic}
   \STATE {\bfseries Input:} data matrix $X$, number $k$ of subspaces.
   \STATE \textbf{1.} obtain the minimizer $Z^*$ to problem \eqref{eq:lrrx:assu2}.
   \STATE \textbf{2.} compute the skinny SVD $Z^*=U^*\Sigma^*(V^*)^T$.
   \STATE \textbf{3.} construct an affinity matrix $W$ by \eqref{eq:w}.
   \STATE \textbf{4.} use $W$ to perform NCut and segment the data samples into $k$ clusters.
\end{algorithmic}
\end{algorithm}
After obtaining $(Z^*,E^*)$ by solving problem \eqref{eq:lrrx:assu2}, the matrix $U^*(U^*)^T$ that identifies the column space of $Z^*$ is useful for subspace segmentation. Let the skinny SVD of $Z^*$ as $U^*\Sigma^*(V^*)^T$, we define an affinity matrix $W$ as follows:
\begin{eqnarray}\label{eq:w}
[W]_{ij} = ([\tilde{U}\tilde{U}^T]_{ij})^2,
\end{eqnarray}
where $\tilde{U}$ is formed by $U^*(\Sigma^*)^{\frac{1}{2}}$ with normalized rows. Here, for obtaining better performance on corrupted data, we assign each column of $U^*$ a weight by multiplying $(\Sigma^*)^{\frac{1}{2}}$. Notice that when the data is clean, $\Sigma^*=\Id$ and thus this technique does not take any effects. The technical detail of using $(\cdot)^2$ is to ensure that the values of the affinity matrix $W$ are positive (note that the matrix $\tilde{U}\tilde{U}^T$ can have negative values). Finally, we could use the spectral clustering algorithms such as Normalized Cuts (NCut) \cite{tpami_2000_ncut} to segment the data samples into a given number $k$ of clusters. Algorithm \ref{alg:segmentation} summarizes the whole procedure of performing segmentation by LRR.
\subsubsection{Estimating the Subspace Number $k$}
 \begin{algorithm}[tb]
   \caption{Estimating the Subspace Number $k$}\label{alg:est_k}
   \label{alg:lrr}
\begin{algorithmic}
   \STATE {\bfseries Input:} data matrix $X$.
   \STATE \textbf{1.} compute the affinity matrix $W$ in the same way as in Algorithm \ref{alg:segmentation}.
   \STATE \textbf{2.} compute the Laplacian matrix $L=\Id-D^{-\frac{1}{2}}WD^{-\frac{1}{2}}$, where $D=\diag{\sum_{j}[W]_{1j},\cdots,\sum_{j}[W]_{nj}}$.
   \STATE \textbf{3.} estimate the subspace number by \eqref{eq:est_k}.
\end{algorithmic}
\end{algorithm}
Although it is generally challenging to estimate the number of subspaces (i.e., number of clusters), it is possible to resolve this model estimation problem due to the block-diagonal structure of the affinity matrix produced by specific algorithms \cite{cvpr_2009_ssc,Vidal:2004:cvpr,Huang:2004:cvpr}. While a strictly block-diagonal affinity matrix $W$ is obtained, the subspace number $k$ can be found by firstly computing the normalized Laplacian (denoted as $L$) matrix of $W$, and then counting the number of zero singular values of $L$. While the obtained affinity matrix is just near block-diagonal (this is the case in reality), one could predict the subspace number as the number of singular values smaller than a threshold. Here, we suggest a soft thresholding approach that outputs the estimated subspace number $\hat{k}$ by
\begin{eqnarray}\label{eq:est_k}
\hat{k}=n-\mathrm{int}(\sum_{i=1}^{n}f_{\tau}(\sigma_i)).
\end{eqnarray}
Here, $n$ is the total number of data samples, $\{\sigma_i\}_{i=1}^{n}$ are the singular values of the Laplacian matrix $L$, $\mathrm{int}(\cdot)$ is the function that outputs the nearest integer of a real number, and $f_{\tau}(\cdot)$ is a soft thresholding operator defined as
\begin{eqnarray*}
f_{\tau}(\sigma)=\left\{
\begin{array}{ll} 1, & \mbox{if $\sigma\geq\tau$},\\
\log_2(1+\frac{\sigma^2}{\tau^2}), & \mbox{otherwise,}
\end{array}\right.
\end{eqnarray*}
where $0<\tau<1$ is a parameter. Algorithm \ref{alg:est_k} summarizes the whole procedure of estimating the subspace number based on LRR.
\subsubsection{Outlier Detection}
As shown in Theorem \ref{theorem:lrr:outlier}, the minimizer $E^*$ (with respect to the variable $E$) can be used to detect the outliers that possibly exist in data. This can be simply done by finding the nonzero columns of $E^*$, when all or a fraction of data samples are clean (i.e., Assumption 1 and Assumption 2). For the cases where the learnt $E^*$ only approximately has sparse column supports, one could use thresholding strategy; that is, the $i$-th data vector of $X$ is judged to be outlier if and only if
\begin{eqnarray}\label{eq:outlier_detection}
\|[E^*]_{:,i}\|_2>\delta,
\end{eqnarray}
where $\delta>0$ is a parameter.

Since the affinity degrees of the outliers are zero or close to being zero (see Fig.\ref{fig:sim:outlier} and Fig.\ref{fig:sim3}), the possible outliers can be also removed by discarding the data samples whose affinity degrees are smaller than a certain threshold. Such a strategy is commonly used in spectral-type methods \cite{cvpr_2009_ssc,Lerman:2011:Lp}. Generally, the underlying principle of this strategy is essential the same as \eqref{eq:outlier_detection}. Comparing to the strategy of characterizing the outliers by affinity degrees, there is an advantage of using $E^*$ to indicate outliers; that is, the formulation \eqref{eq:lrrx:assu2} can be easily extended to include more priors, e.g., the multiple visual features as done in \cite{chen:2011:iccv,lang:2011:tip}.
\vspace{-0.15in}\section{Experiments}\label{sec:exp}
LRR has been used to achieve state-of-the-art performance in several applications such as motion segmentation \cite{liu:2011:iccv}, image segmentation \cite{chen:2011:iccv}, face recognition \cite{liu:2011:iccv} and saliency detection \cite{lang:2011:tip}. In the experiments of this paper, we shall focus on analyzing the essential aspects of LRR, under the context of subspace segmentation and outlier detection.
\vspace{-0.1in}\subsection{Experimental Data}
\begin{table}[t]
\caption{Some information about Hopkins155.}\label{tb:hop155:inf}
\begin{center}
\begin{tabular}{|c|c|c|c|c|}
\hline
    & data      & \# of data    & \# of     & error\\
    & dimension & samples       & subspaces  & level\\
\hline
max & 201       & 556           & 3         & 0.0130\\
min & 31        & 39            & 2         & 0.0002\\
mean& 59.7      & 295.7         & 2.3       & 0.0009\\
std. & 20.2      & 140.8         & 0.5       & 0.0012\\
\hline
\end{tabular}
\end{center}
\end{table}
\subsubsection{Hopkins155}
To verify the segmentation performance of LRR, we adopt for experiments the Hopkins155 \cite{hopkin155} motion database, which provides an extensive benchmark for testing various subspace segmentation algorithms. In Hopkins155, there are 156 video sequences along with the features extracted and tracked in all the frames. Each sequence is a sole dataset (i.e., data matrix) and so there are in total 156 datasets of different properties, including the number of subspaces, the data dimension and the number of data samples. Although the outliers in the data have been manually removed and the overall error level is low, some sequences (about 10 sequences) are grossly corrupted and have notable error levels. Table \ref{tb:hop155:inf} summarizes some information about Hopkins155. For a sequence represented as a data matrix $X$, its error level is estimated by its rank-$r$ approximation:  $\|X-U_r\Sigma_rV_r^T\|_F/\|X\|_F$, where $\Sigma_r$ contains the largest $r$ singular values of $X$, and $U_r$ (resp. $V_r$) is formed by taking the top $r$ left (resp. right) singular vectors. Here, we set $r=4k$ ($k$ is the subspace number of the sequence), due to the fact that the rank of each subspace in motion data is at most 4.
\begin{figure}
\begin{center}
\centerline{\includegraphics[width=0.4\textwidth]{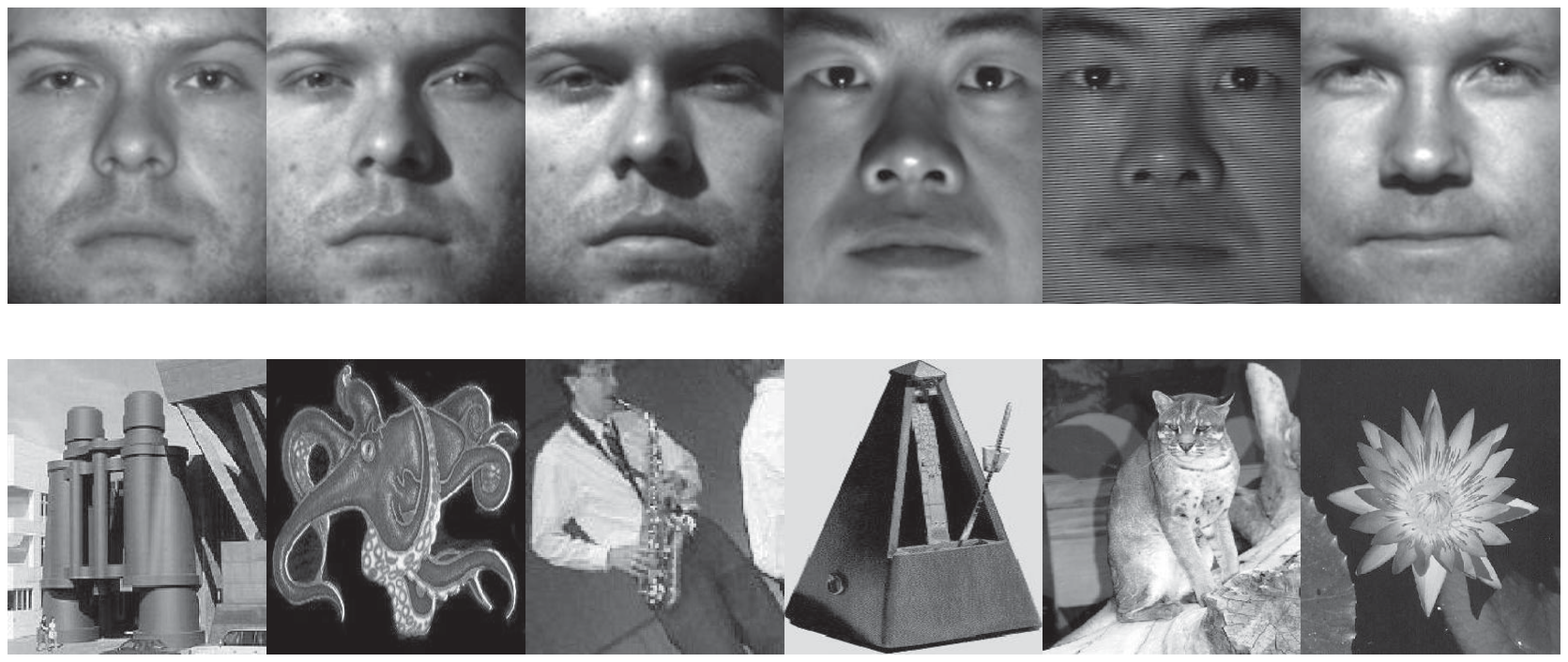}} \caption{Examples of the images in the Yale-Caltech dataset.} \vspace{-0.25in}\label{fig:yaleb-cal:example}
\end{center}
\end{figure}
\subsubsection{Yale-Caltech} To test LRR's effectiveness in the presence of outliers and corruptions, we create a dataset by combining Extended Yale Database B \cite{eyaleb} and Caltech101 \cite{Fei-Fei:2004:LGV}. For Extended Yale Database B, we remove the images pictured under extreme light conditions. Namely, we only use the images with view directions smaller than 45 degrees and light
source directions smaller than 60 degrees, resulting in 1204 authentic samples approximately drawn from a union of 38 low-rank subspaces (each face class corresponds to a subspace). For Caltech101, we only select the classes containing no more than 40 images, resulting in 609 non-face outliers. Fig.\ref{fig:yaleb-cal:example} shows some examples of this dataset.
\vspace{-0.1in}\subsection{Baselines and Evaluation Metrics}\label{sec:baselines}Due to the close connections between PCA and LRR, we choose PCA and RPCA methods as the baselines. Moreover, some previous subspace segmentation methods are also considered.
\subsubsection{PCA (i.e., SIM)} The PCA method is widely used for dimension reduction. Actually, it can also be applied to subspace segmentation and outlier detection as follows: first, we use SVD to obtain the rank-$r$ ($r$ is a parameter) approximation of the data matrix $X$, denoted as $X\approx{}U_r\Sigma_rV_r^T$; second, we utilize $V_rV_r^T$, which is an estimation of the true SIM $V_0V_0^T$, for subspace segmentation in a similar way as Algorithm \ref{alg:segmentation} (the only difference is the estimation of SIM); finally, we compute $E_r=X-U_r\Sigma_rV_r^T$ and use $E_r$ to detect outliers according to \eqref{eq:outlier_detection}.
\subsubsection{RPCA} As an improvement over PCA, the robust PCA (RPCA) methods can also do subspace segmentation and outlier detection. In this work, we consider two RPCA methods introduced in \cite{journal_2009_rpca2} and \cite{xu:2010:nips}, which are based on minimizing
\begin{eqnarray*}
\min_{D,E} \norm{D}_*+\lambda\|E\|_{\ell},\text{ s.t. }X=D+E.
\end{eqnarray*}
In \cite{journal_2009_rpca2}, the $\ell_1$ norm is used to characterize random corruptions, so referred to as ``RPCA$_1$''. In \cite{xu:2010:nips}, the $\ell_{2,1}$ norm is adopted for detecting outliers, so referred to as ``RPCA$_{2,1}$''. The detailed procedures for subspace segmentation and outlier detection are almost the same as the PCA case above. The only difference is that $V_r$ is formed from the skinny SVD of $D^*$ (not $X$), which is obtained by solving the above optimization problem. Note here that the value of $r$ is determined by the parameter $\lambda$, and thus one only needs to select $\lambda$.
\subsubsection{SR} LRR has similar appearance as SR, which has been applied to subspace segmentation \cite{cvpr_2009_ssc}. For fair comparison, in this work we implement an $\ell_{2,1}$-norm based SR method that computes an affinity matrix by minimizing
\begin{eqnarray*}
\min_{Z,E} \norm{Z}_1+\lambda\|E\|_{2,1},\text{ s.t. }X=XZ+E,[Z]_{ii}=0.
\end{eqnarray*}
Here, SR needs to enforce $[Z]_{ii}=0$ to avoid the trivial solution $Z=\Id$. After obtaining a minimizer $(Z^*,E^*)$, we use $W=|Z^*|+|(Z^*)^T|$ as the affinity matrix to do subspace segmentation. The procedure of using
$E^*$ to perform outlier detection is the same as LRR.
\subsubsection{Some other Methods}
We also consider for comparison some previous subspace segmentation methods, including Random Sample Consensus (RANSAC) \cite{cacm_1981_ransac}, Generalized PCA (GPCA) \cite{siam_2008_gpca}, Local Subspace Analysis (LSA) \cite{eccv_2006_lsa}, Agglomerative Lossy Compression (ALC) \cite{motion_pami_2010_Rene}, Sparse Subspace Clustering (SSC) \cite{cvpr_2009_ssc}, Spectral Clustering (SC) \cite{sc:iccv:2009}, Spectral Curvature Clustering (SCC) \cite{Chen:2009:SCC}, Multi Stage Learning (MSL) \cite{sugaya:2004:ieice}, Locally Linear Manifold Clustering (LLMC) \cite{vidal:2007:cvpr}, Local Best-fit Flats (LBF) \cite{Zhang:2011:lbf} and Spectral LBF (SLBF) \cite{Zhang:2011:lbf}.
\subsubsection{Evaluation Metrics}\label{sec:metrix}
Segmentation accuracy (error) is used to measure the performance of segmentation. The areas under the receiver operator characteristic (ROC) curve, known as AUC, is used for for evaluating the quality of outlier detection. For more details about these two evaluation metrics, please refer to Appendix.
\vspace{-0.1in}\subsection{Results on Hopkins155}
\begin{figure}
\begin{center}
\includegraphics[width=0.45\textwidth]{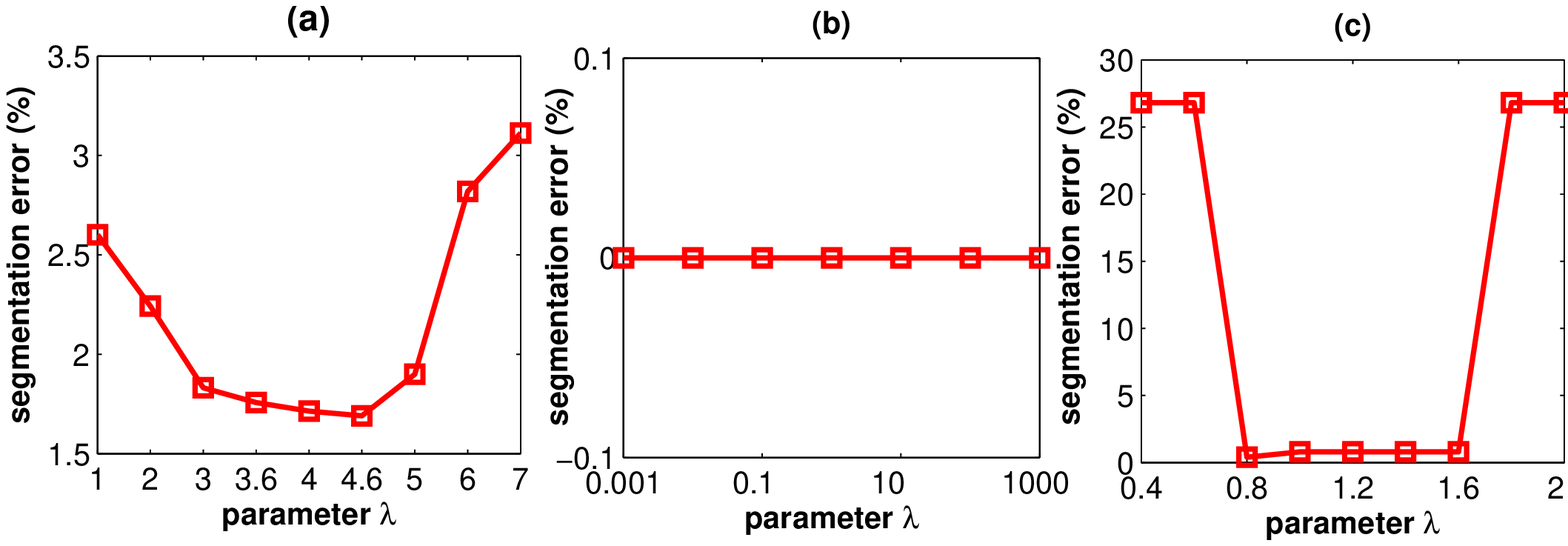}
\caption{\textbf{The influences of the parameter $\lambda$ of LRR.} (a) On all 156 sequences of Hopkins155, the overall segmentation performance is equally good while $3\leq\lambda\leq5$. (b) On the $43$-th sequence, the segmentation error is always 0 for $0.001\leq\lambda\leq1000$. (c) On the $62$-th sequence, the segmentation performance is good only when $0.8\leq\lambda\leq1.6$.}\label{fig:para:hop}\vspace{-0.25in}
\end{center}
\end{figure}
\subsubsection{Choosing the Parameter $\lambda$}\label{sec:lambda}
The parameter $\lambda>0$ is used to balance the effects of the two parts in problem \eqref{eq:lrrx:assu2}. In general, the choice of this parameter depends on the prior knowledge of the error level of data. When the errors are slight, we should use relatively large $\lambda$; when the errors are heavy, we should set $\lambda$ to be relatively small.

Fig.\ref{fig:para:hop}(a) shows the evaluation results over all 156 sequences in Hopkins155: while $\lambda$ ranges from $1$ to $6$, the segmentation error only varies from $1.69\%$ to $2.81\%$; while $\lambda$ ranges from $3$ to $5$, the segmentation error almost remains unchanged, slightly varying from 1.69\% to 1.87\%. This phenomenon is mainly due to two reasons as follows. First, on most sequences (about 80\%) which are almost clean and easy to segment, LRR could work well by choosing $\lambda$ arbitrarily, as exemplified in Fig.\ref{fig:para:hop}(b). Second, there is an ``invariance'' in LRR, namely Theorem \ref{the:lrr:solution} implies that the minimizer to problem \eqref{eq:lrrx:assu2} always satisfies $Z^*\in\spn{X^T}$. This implies that the solution of LRR can be partially stable while $\lambda$ is varying.

The analysis above does not deny the importance of model selection. As shown in Fig.\ref{fig:para:hop}(c), the parameter $\lambda$ can largely affect the segmentation performance on some sequences. Actually, if we turn $\lambda$ to the best for each sequence, the overall error rate is only 0.07\%. Although this number is achieved in an ``impractical'' way, it verifies the significance of selecting the parameter $\lambda$, especially when the data is corrupted. For the experiments below, we choose $\lambda=4$ for LRR.
\subsubsection{Segmentation Performance}
\begin{table}[t]
\caption{Segmentation results (on Hopkins155) of PCA, RPCA$_1$, RPCA$_{2,1}$, SR and LRR.}\label{tb:hop155:pcas}
\begin{center}
\begin{tabular}{|lccccl|}\hline
\multicolumn{6}{|c|}{\textbf{segmentation errors (\%) over all 156 sequences}}\\
      & PCA   & RPCA$_1$ & RPCA$_{2,1}$ & SR & LRR\\
\hline
mean  & 4.56  & 4.13     &3.26          &3.89 & \textbf{1.71}\\
std.  & 10.80 & 10.37    &9.09          &7.70 & \textbf{4.85} \\
max   & 49.78 & 45.83    &47.15         &\textbf{32.57} & 33.33\\\hline
\multicolumn{6}{|c|}{\textbf{average run time (seconds) per sequence}}\\
&\textbf{0.2} &0.8       &0.8           &4.2  &1.9\\\hline
\end{tabular}\vspace{-0.15in}
\end{center}
\end{table}

In this subsection, we show LRR's performance in subspace segmentation with the subspace number given. For comparison, we also list the results of PCA, RPCA$_1$, RPCA$_{2,1}$ and SR (these methods are introduced in Section \ref{sec:baselines}). Table \ref{tb:hop155:pcas} illustrates that LRR performs better than PCA and RPCA. Here, the advantages of LRR are mainly due to its methodology. More precisely, LRR \emph{directly} targets on recovering the row space $V_0V_0^T$, which provably determines the segmentation results. In contrast, PCA and RPCA methods are designed for recovering the column space $U_0U_0^T$, which is designed for dimension reduction. One may have noticed that RPCA$_{2,1}$ outperforms PCA and RPCA$_1$. If we use instead the $\ell_1$ norm to regularize $E$ in \eqref{eq:lrrx:assu2}, the segmentation error is 2.03\% ($\lambda=0.6$, optimally determined). These illustrate that the errors in this database tend to be sample-specific.

Besides the superiorities in segmentation accuracy, another advantage of LRR is that it can work well under a wide range of parameter settings, as shown in Fig.\ref{fig:para:hop}. Whereas, RPCA methods are sensitive to the parameter $\lambda$. Taking RPCA$_{2,1}$ for example, it achieves an error rate of 3.26\% by choosing $\lambda=0.32$. However, the error rate increases to 4.5\% at $\lambda=0.34$, and 3.7\% at $\lambda=0.3$.

The efficiency (in terms of running time) of LRR is comparable to PCA and RPCA methods. Theoretically, the computational complexity (with regard to $d$ and $n$) of LRR is the same as RPCA methods. LRR costs more computational time because its optimization procedure needs more iterations than RPCA to converge.
\subsubsection{Performance of Estimating Subspace Number}\begin{table}[t]
\caption{Results (on Hopkins155) of estimating the subspace number.}\label{tb:hop155:estk}
\begin{center}
\begin{tabular}{cccc}\hline
\# total & \# predicted & prediction rate (\%) &absolute error\\
156      & 121          &77.6      &0.25\\\hline
\end{tabular}
\begin{tabular}{ccccccc}\hline
\multicolumn{7}{c}{\textbf{influences of the parameter $\tau$}}\\
parameter $\tau$          & 0.06 & 0.07 & \textbf{0.08} & 0.09& 0.10 & 0.11\\
prediction rate           & 66.7 & 71.2 & \textbf{77.6} & 75.0& 72.4 &71.2\\
absolute error            &0.37  &0.30  &\textbf{0.25}  &0.26 & 0.29 &0.30\\\hline
\end{tabular}\vspace{-0.15in}
\end{center}
\end{table}
Since there are 156 sequences in total, this database also provides a good benchmark for evaluating the effectiveness of Algorithm \ref{alg:est_k}, which is to estimate the number of subspaces underlying a collection of data samples. Table \ref{tb:hop155:estk} shows the results. By choosing $\tau=0.08$, LRR correctly predicts the true subspace number of 121 sequences. The absolute error (i.e., $|\hat{k}-k|$) averaged over all sequences is $0.25$. These results illustrate that it is hopeful to resolve the problem of estimating the subspace number, which is a challenging model estimation problem.
\subsubsection{Comparing to State-of-the-art Methods}
\begin{table}[t]
\caption{Segmentation errors (\%) on Hopkins155 (155 sequences).}
\label{tb:hop155}
\begin{center}
\begin{tabular}{|cccccc|}\hline
      & GPCA &RANSAC &MSL & LSA  &LLMC \\
mean  &10.34 &9.76   &5.06& 4.94 &4.80 \\\hline
      &PCA   &LBF    &ALC & SCC  &SLBF\\
mean  &4.47  &3.72   &3.37& 2.70 &1.35\\\hline
      &      &       & \multicolumn{3}{|c|}{LRR}\\\cline{4-6}
      &SSC   & SC    & \multicolumn{1}{|c}{\cite{vidal:2011:cvpr}} & \cite{liu:2011:iccv} & this paper\\\hline
mean  &1.24  & 1.20  & 1.22 & \textbf{0.85} & 1.59\\
\hline
\end{tabular}\vspace{-0.15in}
\end{center}
\end{table}
Notice that previous methods only report the results for 155 sequences. After discarding the degenerate sequence, the error rate of LRR is 1.59\% which is comparable to the state-of-the-art methods, as shown in Table \ref{tb:hop155}. The performance of LRR can be further improved by refining the formulation \eqref{eq:lrrx:assu2}, which uses the observed data matrix $X$ itself as the dictionary. When the data is corrupted by dense noise (this is usually true in reality), this certainly is not the best choice. In \cite{vidal:2011:cvpr} and \cite{rsi:tsp}, a non-convex formulation is adopted to learn the original data $X_0$ and its row space $V_0V_0^T$ simultaneously:
\begin{eqnarray*}
\min_{D,Z,E}\|Z\|_*+\lambda\|E\|_1\textrm{ s.t. }X=D+E,D=DZ,
\end{eqnarray*}
where the unknown variable $D$ is used as the dictionary. This method can achieve an error rate of 1.22\%. In \cite{liu:2011:iccv}, it is explained that the issues of choosing dictionary can be relieved by considering the unobserved, hidden data. Furthermore, it is deduced that the effects of hidden data can be approximately modeled by the following convex formulation:
\begin{eqnarray*}
\min_{Z,L,E}\|Z\|_*+\|L\|_*+\lambda\|E\|_1\textrm{ s.t. }X=XZ+LX+E,
\end{eqnarray*}
which intuitively integrates subspace segmentation and feature extraction into a unified framework. This method can achieve an error rate of 0.85\%, which outperforms other subspace segmentation algorithms.

While several methods have achieved an error rate below 3\% on Hopkins155, subspace segmentation problem is till far from solved. A long term difficult is how to solve the model selection problems, e.g., estimating the parameter $\lambda$ of LRR. Also, it would not be trivial to handle more complicated datasets that contain more noise, outliers and corruptions.
\vspace{-0.1in}\subsection{Results on Yale-Caltech}
The goal of this test is to identify 609 non-face outliers and segment the rest 1204 face images into 38 clusters. The performance of segmentation and outlier detection is evaluated by segmentation accuracy (ACC) and AUC, respectively. While investigating segmentation performance, the affinity matrix is computed from all images, including both the face images and non-face outliers. However, for the convenience of evaluation, the outliers and the corresponding affinities are removed (according to the ground truth) before using NCut to obtain the segmentation results.
\begin{table}[t]
\caption{Segmentation accuracy (ACC) and AUC comparison on the Yale-Caltech dataset.}
\label{tb:yale}
\begin{center}
\begin{tabular}{|cccccc|}\hline
                &PCA            &RPCA$_1$       &RPCA$_{2,1}$       &SR            &LRR\\\hline
ACC (\%)        &77.15          &82.97          &83.72              &73.17         &\textbf{86.13}\\
AUC             &0.9653         &0.9819         &0.9863             &0.9239        &\textbf{0.9927}\\
time (sec.)     &\textbf{0.6}   &60.8           &59.2               &383.5         &152.6\\
\hline
\end{tabular}\vspace{-0.15in}
\end{center}
\end{table}

We resize all images into $20\times20$ pixels and form a data matrix of size $400\times1813$. Table \ref{tb:yale} shows the results of PCA, RPCA, SR and LRR. It can be seen that LRR is better than PCA and RPCA methods, in terms of both subspace segmentation and outlier detection. These experimental results are consistent with Theorem \ref{theorem:lrr:outlier}, which shows that LRR has a stronger guarantee than RPCA methods in performance. Notice that SR is  behind the others \footnote{The results (for outlier detection) in Table \ref{tb:yale} are obtained by using the strategy of \eqref{eq:outlier_detection}. While using the strategy of checking the affinity degree, the results produced by SR is even worse, only achieving an AUC of 0.81 by using the best parameters.}. This is because the presence or absence of outliers is unnecessary to notably alert the sparsity of the reconstruction coefficients, and thus it is hard for SR to handle well the data contaminated by outliers.
\begin{figure}
\begin{center}
\includegraphics[width=0.3\textwidth]{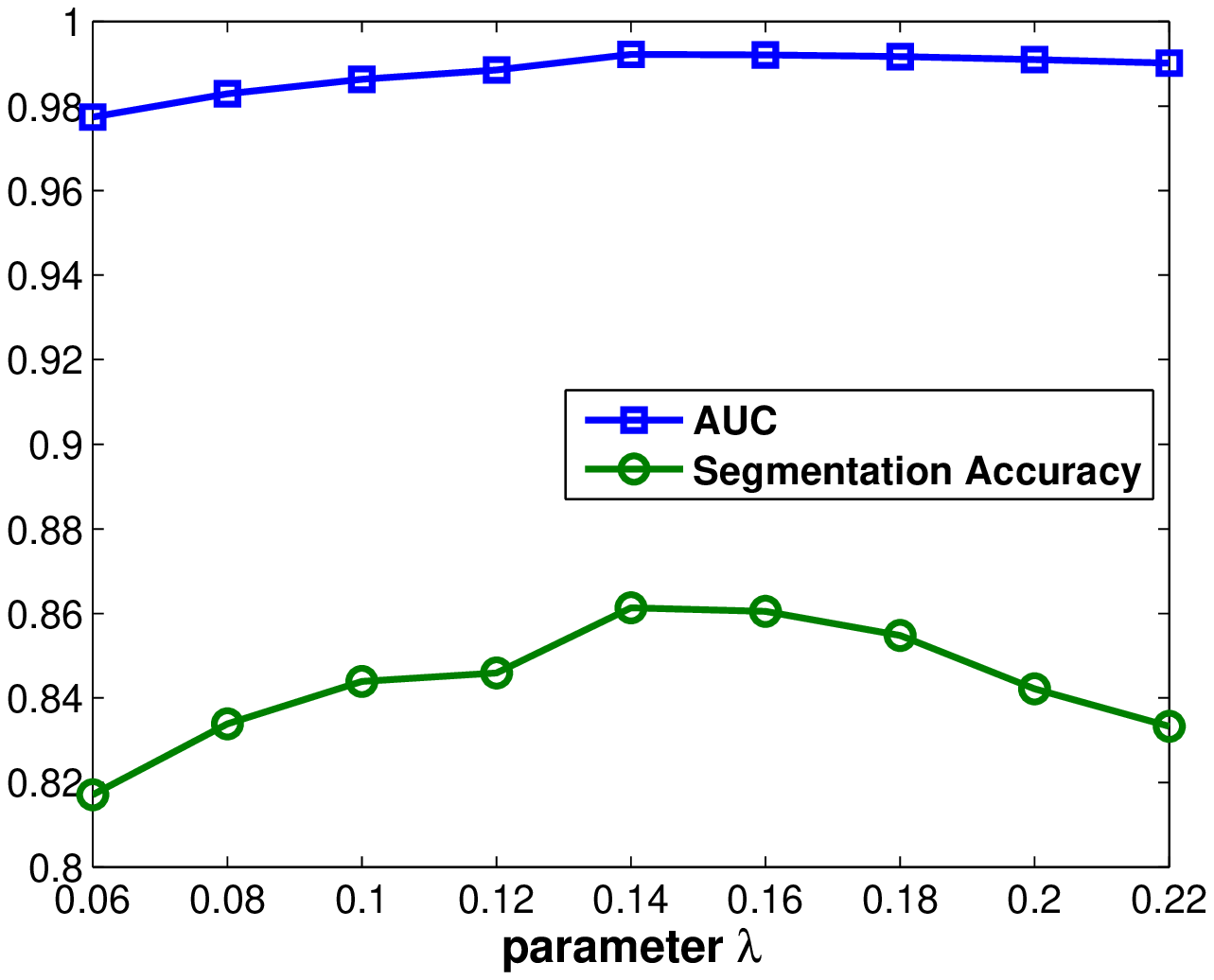}
\caption{\textbf{The influences of the parameter $\lambda$ of LRR.} These results are collected from the Yale-Caltech dataset. All images are resized to $20\times20$ pixels.}\label{fig:para:yale}\vspace{-0.25in}
\end{center}
\end{figure}

Fig.\ref{fig:para:yale} shows the performance of LRR while the parameter $\lambda$ varies from 0.06 to 0.22. Notice that LRR is more sensitive to $\lambda$ on this dataset than on Hopkins155. This is because the error level of Hopkins155 is quite low (see Table \ref{tb:hop155:inf}), whereas, the Yale-Caltach dataset contains outliers and corrupted images (see Fig.\ref{fig:yaleb-cal:example}).
\begin{figure}
\begin{center}
\includegraphics[width=0.4\textwidth]{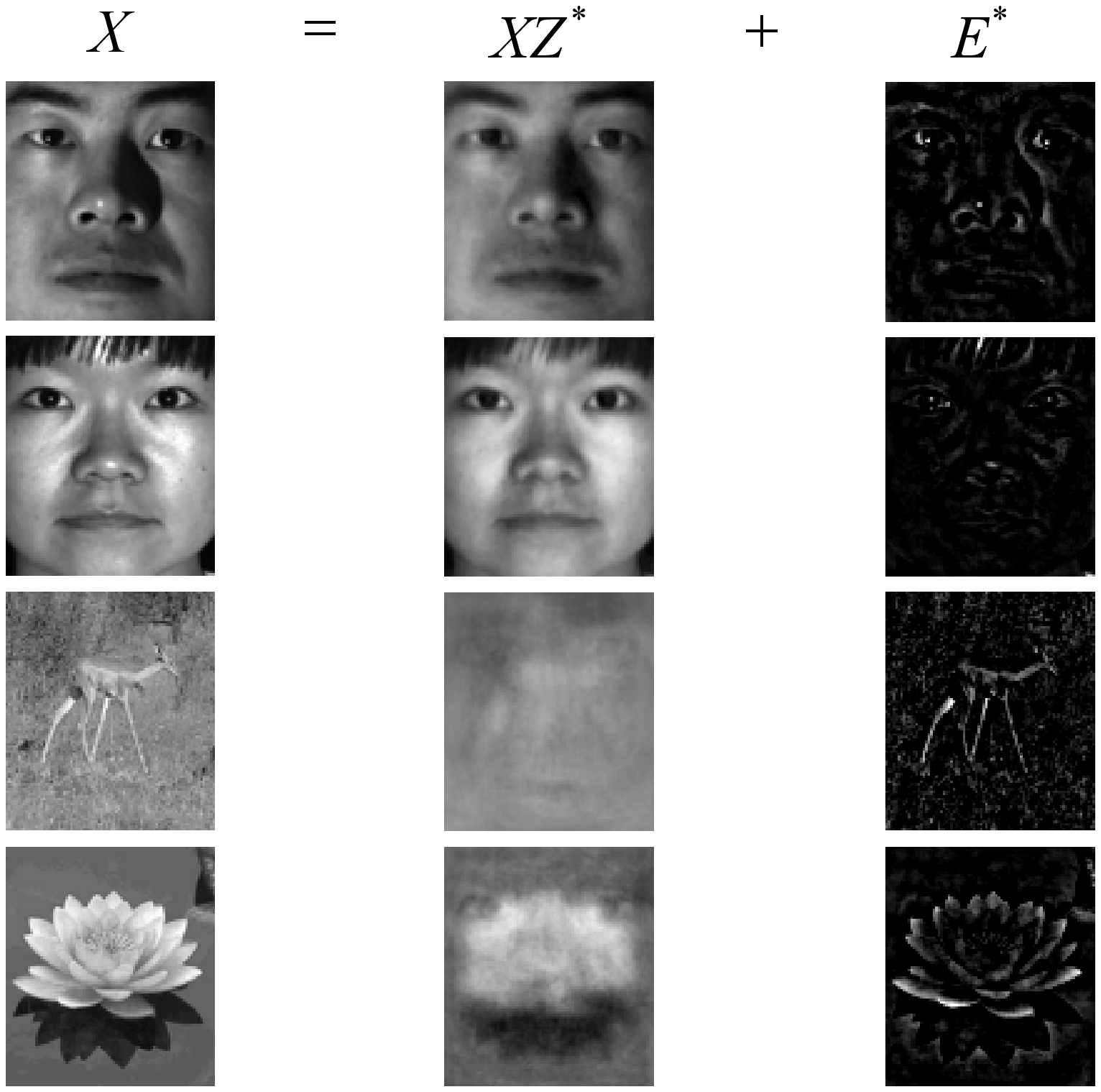}
\caption{\textbf{Some examples of using LRR to correct the errors in the Yale-Caltech dataset.} Left: the original data matrix $X$; Middle: the corrected data $XZ^*$; Right: the error $E^*$.}\label{fig:yale:res}\vspace{-0.25in}
\end{center}
\end{figure}

To visualize LRR's effectiveness in error correction, we create another data matrix with size $8064\times1813$ by resizing all images into $96\times84$. Fig.\ref{fig:yale:res} shows some results produced by LRR. It is worth noting that the ``error'' term $E^*$ can contain ``useful'' information, e.g., the eyes and salient objects. Here, the principle is to decompose the data matrix into a low-rank part and a sparse part, with the low-rank part ($XZ^*$) corresponding to the principal features of the whole dataset, and the sparse part ($E^*$) corresponding to the rare features which cannot be modeled by low-rank subspaces. This implies that it is possible to use LRR to extract the discriminative features and salient regions, as done in face recognition \cite{liu:2011:iccv} and saliency detection \cite{lang:2011:tip}.
\vspace{-0.15in}\section{Conclusion and Future Work}\label{sec:conclusion}
In this paper we proposed low-rank representation (LRR) to identify the subspace structures from corrupted data. Namely, our goal is to segment the samples into their respective subspaces and correct the possible errors simultaneously. LRR is a generalization of the recently established RPCA methods \cite{journal_2009_rpca2,xu:2010:nips}, extending the recovery of corrupted data from single subspace to multiple subspaces. Also, LRR generalizes the approach of Shape Interaction Matrix (SIM), giving a way to define an SIM between two different matrices (see Theorem \ref{theorem:unique:nonoise}), and providing a mechanism to recover the true SIM (or row space) from corrupted data. Both theoretical and experimental results show the effectiveness of LRR. However, there still remain several problems for future work:
\begin{itemize}
\item[$\bullet$] It may achieve significant improvements by learning a dictionary $A$, which partially determines the solution of LRR. In order to exactly recover the row space $V_0$, Theorem \ref{the:lrr:solution} illustrates that the dictionary $A$ must satisfy the condition of $V_0\in\spn{A^T}$. When the data is only contaminated by outliers, this condition can be obeyed by simply choosing $A=X$. However, this choice cannot ensure the validity of $V_0\in\spn{A^T}$ while the data contains other types of errors, e.g., dense noise.

\item[$\bullet$] The proofs of Theorem \ref{theorem:lrr:outlier} are specific to the case of $A=X$. As a future direction, it is interesting to see whether the technique presented can be extended to general dictionary matrices other than $X$.

\item[$\bullet$] A critical issue in LRR is how to estimate or select the parameter $\lambda$. For the data contaminated by various errors such as noise, outliers and corruptions, the estimation of $\lambda$ is quite challenging.

\item[$\bullet$] The subspace segmentation should not be the only application of LRR. Actually, it has been successfully used in the applications other than segmentation, e.g., saliency detection \cite{lang:2011:tip}. In general, the presented LRR method can be extended to solve various applications well.
\end{itemize}
\section*{Appendix}\label{sec:appendix}
\subsection{Terminologies}\label{subsec:ter}
In this subsection, we introduce some terminologies used in the paper.
\subsubsection{Block-Diagonal Matrix} In this paper, a matrix $M$ is called block-diagonal if it has the form as in (1). For the matrix $M$ which itself is not block-diagonal but can be transformed to be block-diagonal by simply permuting its rows and/or columns, we also say that $M$ is block-diagonal. In summary, we say that a matrix $M$ is block-diagonal whenever there exist two permutation matrices $P_1$ and $P_2$ such that $P_1MP_2$ is block-diagonal.
\subsubsection{Union and Sum of Subspaces} For a collection of $k$ subspaces $\{\mathcal{S}_1,\mathcal{S}_2,\cdots,\mathcal{S}_k\}$, their union is defined by $\cup_{i=1}^{k}\mathcal{S}_i=\{y:y\in\mathcal{S}_j, \textrm{ for some } 1\leq{}j\leq{}k\}$, and their sum is defined by $\sum_{i=1}^{k}\mathcal{S}_i=\{y:y=\sum_{j=1}^{k}y_j,y_j\in\mathcal{S}_j\}$. If any $y\in\sum_{i=1}^{k}\mathcal{S}_i$ can be uniquely expressed as $y=\sum_{j=1}^{k}y_j$, $y_j\in\mathcal{S}_j$, then the sum is also called the directed sum, denoted as $\sum_{i=1}^{k}\mathcal{S}_i=\oplus_{i=1}^{k}\mathcal{S}_i$.
\subsubsection{Independent Subspaces} A collection of $k$ subspaces $\{\mathcal{S}_1,\mathcal{S}_2,\cdots,\mathcal{S}_k\}$ are independent if and only if $\mathcal{S}_i\cap\sum_{j\neq{i}}\mathcal{S}_j=\{0\}$ (or $\sum_{i=1}^{k}\mathcal{S}_i=\oplus_{i=1}^{k}\mathcal{S}_i$). When the subspaces are of low-rank and the ambient dimension is high, the independent assumption is roughly equal to the pairwise disjoint assumption; that is $\mathcal{S}_i\cap\mathcal{S}_j=\{0\},\forall{i\neq{j}}$.
\subsubsection{Full SVD and Skinny SVD} For an $m\times{n}$ matrix $M$ (without loss of generality, assuming $m\leq{n}$), its Singular Value Decomposition (SVD) is defined by $M=U[\Sigma,0]V^T$,
where $U$ and $V$ are orthogonal matrices and $\Sigma=\diag{\sigma_1,\sigma_2,\cdots,\sigma_m}$ with $\{\sigma_i\}_{i=1}^m$ being singular values.
The SVD defined in this way is also called the \emph{full} SVD. If we only keep the positive singular values, the reduced form is called the \emph{skinny} SVD. For a matrix $M$ of rank $r$, its skinny SVD is computed by $M=U_r\Sigma_rV_r^T$, where $\Sigma_r=\diag{\sigma_1,\sigma_2,\cdots,\sigma_r}$ with $\{\sigma_i\}_{i=1}^r$ being positive singular values. More precisely, $U_r$ and $V_r$ are formed by taking the first $r$ columns of $U$ and $V$, respectively.
\subsubsection{Pseudoinverse} For a matrix $M$ with skinny SVD $U\Sigma{}V^T$, its pseudoinverse is uniquely defined by
$$M^\dag=V\Sigma^{-1}U^T.$$

\subsubsection{Column Space and Row Space} For a matrix $M$, its column (resp. row) space is the linear space spanned by its column (resp. row) vectors. Let the skinny SVD of $M$ be $U\Sigma{}V^T$, then $U$ (resp. $V$) are orthonormal bases of the column (resp. row) space, and the corresponding orthogonal projection is given by $UU^T$ (resp. $VV^T$). Since $UU^T$ (resp. $VV^T$) is \emph{uniquely} determined by the column (resp. row) space, sometimes we also use $UU^T$ (resp. $VV^T$) to refer to the column (resp. row) space.
\subsubsection{Affinity Degree} Let $M$ be a symmetric affinity matrix for a collection of $n$ data samples, the affinity degree of the $i$-th sample is defined by $\#\{(j):[M]_{ij}\neq0\}$, i.e., the number of samples connected to the $i$-th sample.

\subsection{Proofs}
\subsubsection{Proof of Theorem 4.1}
The proof of Theorem 4.1 is based on the following three lemmas.
\begin{lemma}\label{lemma:basic}
Let $U$, $V$ and $M$ be matrices of compatible dimensions. Suppose both $U$ and $V$ have orthogonal columns,
i.e., $U^TU=\Id$ and $V^TV=\Id$, then we have
\begin{eqnarray*}
\norm{M}_* = \|UMV^T\|_*.
\end{eqnarray*}
\end{lemma}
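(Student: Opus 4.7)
The plan is to show that $M$ and $UMV^T$ have exactly the same set of positive singular values; since the nuclear norm is simply the sum of singular values, equality of nuclear norms follows immediately.

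First, I would take the skinny SVD of $M$, writing $M = P\Sigma Q^T$ where $P$ and $Q$ have orthonormal columns ($P^TP = \Id$, $Q^TQ = \Id$) and $\Sigma$ is a diagonal matrix whose diagonal entries are the (strictly positive) singular values of $M$. Substituting, I get
\begin{eqnarray*}
UMV^T = (UP)\,\Sigma\,(VQ)^T.
\end{eqnarray*}

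Next, I would verify that this factorization is itself a valid skinny SVD of $UMV^T$. The key computation is that
\begin{eqnarray*}
(UP)^T(UP) = P^T(U^TU)P = P^TP = \Id,
\end{eqnarray*}
and analogously $(VQ)^T(VQ) = Q^T(V^TV)Q = \Id$, so both $UP$ and $VQ$ have orthonormal columns. Combined with $\Sigma$ being diagonal with positive diagonal entries, this exhibits a skinny SVD of $UMV^T$ with the same singular values as $M$. Therefore
\begin{eqnarray*}
\|UMV^T\|_* = \sum_i \sigma_i(UMV^T) = \sum_i \sigma_i(M) = \|M\|_*,
\end{eqnarray*}
which is the claim.

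There is essentially no substantive obstacle here; the only thing to be careful about is the distinction between ``orthogonal columns'' and ``square orthogonal matrix.'' Because $U$ and $V$ need not be square, the products $UP$ and $VQ$ are not square either, but what matters for the skinny SVD is only that they have orthonormal columns, which the computation above confirms. Thus the lemma reduces to the standard invariance of the singular values under left/right multiplication by matrices with orthonormal columns.
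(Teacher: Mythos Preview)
Your proof is correct and follows essentially the same route as the paper: take an SVD of $M$, push $U$ and $V$ through to obtain a factorization of $UMV^T$, verify that the new outer factors still have orthonormal columns, and conclude that the singular values (hence the nuclear norm) are unchanged. The only cosmetic difference is that the paper uses the full SVD of $M$ while you use the skinny SVD; the verification and conclusion are otherwise identical.
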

\begin{proof}Let the full SVD of $M$ be $M=U_M\Sigma_MV_M^T$, then $UMV^T=(UU_M)\Sigma_{M}(VV_M)^T$.
As $(UU_M)^T(UU_M)=\Id$ and $(VV_M)^T(VV_M)=\Id$, $(UU_M)\Sigma_{M}(V_MV)^T$ is actually an SVD of $UMV^T$. By the definition of the nuclear norm, we have $\norm{M}_* = \trace{\Sigma_M}=\norm{UMV^T}_*$.
\end{proof}

\begin{lemma}\label{lemma:abcd:0condition}
For any four matrices $B$, $C$, $D$ and $F$ of compatible dimensions, we have
\begin{eqnarray*}
\norm{\left[\begin{array}{cc}
B&C\\
D&F\\
\end{array}\right]}_* \geq \norm{B}_*,
\end{eqnarray*}
where the equality holds if and only if $C=0,D=0$ and $F=0.$
\end{lemma}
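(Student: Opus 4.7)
The plan is to handle both the inequality and the equality characterization with a single tool: the variational identity
\[
\norm{M}_* \;=\; \tfrac{1}{2}\min_{M=XY^T}\bigl(\norm{X}_F^2 + \norm{Y}_F^2\bigr),
\]
a classical factorization formula for the nuclear norm. Denoting $N = \left[\begin{smallmatrix} B & C \\ D & F\end{smallmatrix}\right]$, I would first pick an optimal factorization $N = X^{*}(Y^{*})^T$ attaining the minimum in the identity. Partitioning $X^{*} = [X_1;X_2]$ and $Y^{*} = [Y_1;Y_2]$ conformally with the block rows of $N$, the block product reads off $B = X_1 Y_1^T$, $C = X_1 Y_2^T$, $D = X_2 Y_1^T$, and $F = X_2 Y_2^T$.

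Applying the variational upper bound to the factorization $B = X_1 Y_1^T$ then gives the chain
\[
\norm{B}_* \;\le\; \tfrac{1}{2}\bigl(\norm{X_1}_F^2 + \norm{Y_1}_F^2\bigr) \;\le\; \tfrac{1}{2}\bigl(\norm{X^{*}}_F^2 + \norm{Y^{*}}_F^2\bigr) \;=\; \norm{N}_*,
\]
which is the desired inequality. For the equality characterization, the ``if'' direction is immediate, since $C=D=F=0$ makes $N$ block-diagonal with $B$ and a zero block, so the nonzero singular values of $N$ coincide exactly with those of $B$. For the converse, I would assume $\norm{N}_* = \norm{B}_*$; then both inequalities in the chain must be tight. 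Tightness of the second forces $\norm{X_2}_F^2 + \norm{Y_2}_F^2 = 0$, hence $X_2 = 0$ and $Y_2 = 0$. Substituting back yields $C = X_1 \cdot 0 = 0$, $D = 0 \cdot Y_1^T = 0$, and $F = 0$, completing the characterization.

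The step I expect to require the most care is justifying the variational factorization identity itself, which is standard but was not introduced earlier in this excerpt. If citing it (or giving a short self-contained derivation via the duality between spectral and nuclear norms plus AM--GM) is undesirable, a fallback for the inequality alone is the dual-norm characterization $\norm{M}_* = \sup_{\norm{W}\le 1}\langle W,M\rangle$: the test matrix $\widetilde W = \diag{W^{*},0}$, where $W^{*}$ attains the supremum for $B$, satisfies $\norm{\widetilde W}\le 1$ and witnesses $\norm{N}_*\ge\norm{B}_*$. Extracting the equality case by this dual-norm route is considerably more delicate, as it requires analyzing the subdifferential of the nuclear norm at $N$; that analysis would then be the genuine technical hurdle, and it is why I prefer the variational-factorization approach, which dispatches both halves of the lemma uniformly.
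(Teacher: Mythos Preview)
Your argument is correct. The variational identity $\norm{M}_*=\tfrac12\min_{M=XY^T}(\norm{X}_F^2+\norm{Y}_F^2)$ is standard, and once it is in hand your chain of inequalities is clean: the second inequality becomes an equality precisely when $X_2=0$ and $Y_2=0$, which forces $C=D=F=0$ as you note.

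The paper takes a different route. It reduces the $2\times 2$ block statement to the one-block fact that $\norm{[M_1,M_2]}_*\ge\norm{M_1}_*$ (and its vertical analogue), with equality iff $M_2=0$, and then applies this twice to peel off first a block row and then a block column. The paper simply asserts this one-block fact without proof. Your approach is in a sense more self-contained: it handles the full $2\times 2$ block and the equality characterization in a single stroke, whereas the paper's reduction still leaves the core inequality (and its equality case) to be justified---which one would typically do by exactly the kind of argument you give, or alternatively via singular-value interlacing for submatrices. The paper's route is slightly more modular if the one-block fact is already available; yours is cleaner as a standalone proof.
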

\begin{proof}
The proof is simply based on the following fact: for any two matrices $M_1$ and $M_2$, we have
\begin{eqnarray*}
\|[M_1,M_2]\|_*\geq{}\|M_1\|_*&\textrm{and/or}&\|[M_1;M_2]\|_*\geq{}\|M_1\|_*
\end{eqnarray*}
and the equality can hold if and only if $M_2=0$.
\end{proof}

\begin{lemma}\label{lemma:unique:nonoise}
Let $U$, $V$ and $M$ be given matrices of compatible dimensions. Suppose both $U$ and $V$ have orthogonal columns,
i.e., $U^TU=\Id$ and $V^TV=\Id$, then the following optimization problem
\begin{eqnarray}\label{eq:lemma:problem}
\min_{Z} \norm{Z}_*, & \textrm{s.t.} & U^TZV=M,
\end{eqnarray}
has a unique minimizer $Z^* = UMV^T$.
\end{lemma}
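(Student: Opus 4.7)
My plan is to reduce the constrained minimization to the block-matrix inequality of Lemma A.2 by a unitary change of basis, so that Lemma A.1 and Lemma A.2 together pin down the minimizer. First I would verify feasibility: since $U^TU=\Id$ and $V^TV=\Id$, the candidate $Z^*=UMV^T$ satisfies $U^T Z^* V = M$, so it is feasible for \eqref{eq:lemma:problem}.

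Next I would complete $U$ and $V$ to full orthogonal matrices by choosing $U_\perp$ and $V_\perp$ such that $\tilde U = [U,U_\perp]$ and $\tilde V = [V,V_\perp]$ are orthogonal. For any feasible $Z$, I would consider the block decomposition
\begin{eqnarray*}
\tilde U^T Z \tilde V = \left[\begin{array}{cc} U^T Z V & U^T Z V_\perp \\ U_\perp^T Z V & U_\perp^T Z V_\perp \end{array}\right] = \left[\begin{array}{cc} M & C \\ D & F \end{array}\right],
\end{eqnarray*}
where the upper-left block is fixed by the constraint $U^T Z V = M$, while $C,D,F$ are free. By Lemma A.1 applied to the orthogonal matrices $\tilde U$ and $\tilde V$, $\|Z\|_* = \|\tilde U^T Z \tilde V\|_*$, and then by Lemma A.2 this quantity is at least $\|M\|_*$, with equality if and only if $C=0$, $D=0$, $F=0$.

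Thus the minimum value of $\|Z\|_*$ over the feasible set equals $\|M\|_*$, which is attained precisely when $\tilde U^T Z \tilde V = \diag{M,0}$; inverting the change of basis gives $Z = \tilde U \diag{M,0} \tilde V^T = UMV^T$, establishing both attainment and uniqueness. I do not anticipate a genuine obstacle here: the only subtle point is that Lemma A.2 must be used in the ``if and only if'' form to get uniqueness, and one should be careful that the orthogonal completions $U_\perp, V_\perp$ exist, which they do as long as $U$ and $V$ have orthonormal columns (possibly after noting that the trivial case where $U$ or $V$ is already square makes the off-diagonal blocks vanish by convention).
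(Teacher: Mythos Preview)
Your proof is correct and uses the same ingredients as the paper --- orthogonal completions of $U$ and $V$, unitary invariance of $\norm{\cdot}_*$, and the block inequality of Lemma~A.2 in its ``if and only if'' form. The paper's argument is slightly less direct: it first lower-bounds $\norm{Z}_*$ by taking the full SVD of $Z$ and completing $U^TU_Z$ and $V_Z^TV$ to orthogonal matrices, and then establishes uniqueness in a separate step that is essentially identical to your single-step argument; your route of completing $U$ and $V$ from the outset handles optimality and uniqueness simultaneously and is cleaner.
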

\begin{proof}
First, we prove that $\norm{M}_*$ is the minimum objective function value and $Z^*=UMV^T$ is a minimizer. For any
feasible solution $Z$, let $Z = U_Z\Sigma_ZV_Z^T$ be its full SVD. Let $B=U^TU_Z$ and $C=V_Z^TV$. Then the constraint
$U^TZV=M$ is equal to
\begin{equation}\label{eq:lemma:BSzC=M}
B\Sigma_ZC = M.
\end{equation}
Since $BB^T=\Id$ and $C^TC=\Id$, we can find the orthogonal complements \footnote{When $B$ and/or $C$ are already orthogonal
matrices, i.e., $B_{\bot}=\emptyset$ and/or $C_{\bot}=\emptyset$, our proof is still valid.} $B_{\bot}$ and $C_{\bot}$ such that
\begin{eqnarray*}
\left[\begin{array}{c}
B\\
B_{\bot}\\
\end{array}\right] &\textrm{and}&[C,C_{\bot}]
\end{eqnarray*}
are orthogonal matrices. According to the unitary invariance of the nuclear norm, Lemma \ref{lemma:abcd:0condition} and
\eqref{eq:lemma:BSzC=M}, we have
\begin{eqnarray*}
\norm{Z}_* &=& \norm{\Sigma_Z}_* = \norm{\left[\begin{array}{c}
B\\
B_{\bot}\\
\end{array}\right]\Sigma_Z[C,C_{\bot}]}_*\\
&=&\norm{\left[\begin{array}{cc}
B\Sigma_ZC&B\Sigma_ZC_{\bot}\\
B_{\bot}\Sigma_ZC&B_{\bot}\Sigma_ZC_{\bot}\\
\end{array}\right]}_*\\
&\geq&\norm{B\Sigma_ZC}_*=\norm{M}_*,
\end{eqnarray*}
Hence, $\norm{M}_*$ is the minimum objective function value of problem \eqref{eq:lemma:problem}. At the same time, Lemma \ref{lemma:basic} proves that
$\norm{Z^*}_*=\norm{UMV^T}_*=\norm{M}_*$. So $Z^*=UMV^T$ is a minimizer to problem \eqref{eq:lemma:problem}.

Second, we prove that $Z^*=UMV^T$ is the unique minimizer. Assume that $Z_1=UMV^T+H$ is another optimal solution. By $U^TZ_1V=M$, we
have
\begin{equation}\label{eq:lemma:UHV=0}
U^THV = 0.
\end{equation}
Since $U^TU=\Id$ and $V^TV=\Id$, similar to above, we can construct two orthogonal matrices: $[U,U_{\bot}]$ and $[V,V_{\bot}]$.
By the optimality of $Z_1$, we have
\begin{eqnarray*}
\norm{M}_*&=&\norm{Z_1}_*=\norm{UMV^T+H}_*\\
&=&\norm{\left[\begin{array}{c}
U^T\\
U_{\bot}^T\\
\end{array}\right](UMV^T+H)[V,V_{\bot}]}_*\\
&=&\norm{\left[\begin{array}{cc}
M&U^THV_{\bot}\\
U_{\bot}^THV&U_{\bot}^THV_{\bot}\\
\end{array}\right]}_*\\
&\geq&\norm{M}_*.
\end{eqnarray*}
According to Lemma \ref{lemma:abcd:0condition}, the above equality can hold if and only if
\begin{equation*}
U^THV_{\bot}=U_{\bot}^THV=U_{\bot}^THV_{\bot}=0.
\end{equation*}
Together with \eqref{eq:lemma:UHV=0}, we conclude that $H=0$. So the optimal solution is unique.
\end{proof}

It is worth noting that Lemma \ref{lemma:unique:nonoise} allows us to get closed-form solutions to a class of nuclear norm minimization problems, and leads to a simple proof of Theorem 4.1.

\begin{proof}{\bf (of Theorem 4.1)}
Since $X\in{\spn{A}}$, we have $\rank{[X,A]}=\rank{A}$. Let's define $V_X$ and $V_A$ as follows: Compute the skinny SVD of the horizontal concatenation of $X$ and $A$, denoted as $[X,A]=U\Sigma{}V^T$, and partition $V$ as $V=[V_X;V_A]$ such that $X=U\Sigma{}V_X^T$ and $A=U\Sigma{V_A}^T$ (note that $V_A$ and $V_X$ may be not column-orthogonal). By this definition, it can be concluded that the matrix $V_A^T$ has full row rank. That is, if the skinny SVD of $V_A^T$ is $U_1\Sigma_1V_1^T$, then $U_1$ is an orthogonal matrix. Through some simple computations, we
have
\begin{equation}\label{eq:theorem:inv_va}
V_A(V_A^TV_A)^{-1} = V_1\Sigma_1^{-1}U_1^T.
\end{equation}
Also, it can be calculated that the constraint $X=AZ$ is equal to $V_X^T=V_A^TZ$, which is also equal to
$\Sigma_1^{-1}U_1^TV_X^T=V_1^TZ$. So problem (5) is equal to the following optimization problem:
\begin{eqnarray*}
\min_{Z} \norm{Z}_*, & \textrm{s.t.} & V_1^TZ=\Sigma_1^{-1}U_1^TV_X^T.
\end{eqnarray*}
By Lemma \ref{lemma:unique:nonoise} and \eqref{eq:theorem:inv_va}, problem (5) has a unique minimizer
\begin{eqnarray*}
Z^* &=& V_1\Sigma_1^{-1}U_1^TV_X^T= V_A(V_A^TV_A)^{-1}V_X^T.
\end{eqnarray*}

Next, it will be shown that the above closed-form solution can be further simplified. Notice that $V_A^T = \Sigma^{-1}U^TA$ and $V_X^T = \Sigma^{-1}U^TX$. Then we have
\begin{eqnarray*}
Z^* &=& A^TU\Sigma^{-1}(\Sigma^{-1}U^TAA^TU\Sigma^{-1})^{-1}\Sigma^{-1}U^TX\\
    &=& A^TU(U^TAA^TU)^{-1}U^TX\\
    &=&(U^TA)^{\dag}U^TX\\
    &=&A^{\dag}X,
\end{eqnarray*}
where the last equality is due to that $(U^TA)^{\dag}U^T=(\Sigma_AV_A^T)^{\dag}U^T=(U\Sigma_AV_A^T)^{\dag}=A^{\dag}$.
\end{proof}

\subsubsection{Proof of Corollary 4.1}
\begin{proof}By $X\in\spn{A}$, we have $\rank{A^{\dag}X}=\rank{X}$. Hence, $\rank{Z^*}=\rank{X}$. At the same time, for any feasible
solution $Z$ to problem (5), we have $\rank{Z}\geq\rank{AZ}=\rank{X}$. So, $Z^*$ is also optimal to
problem (4).
\end{proof}

\subsubsection{Proof of Theorem 4.2}
The proof of Theorem 4.2 is based on the following well-known lemma.
\begin{lemma}\label{lemma:abcd}
For any four matrices $B$, $C$, $D$ and $F$ of compatible dimensions, we have
\begin{eqnarray*}
\norm{\left[\begin{array}{cc}
B&C\\
D&F\\
\end{array}\right]}_* \geq \norm{\left[\begin{array}{cc}
B&0\\
0&F\\
\end{array}\right]}_* =\norm{B}_* + \norm{F}_*.
\end{eqnarray*}
\end{lemma}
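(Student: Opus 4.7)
Denote the full block matrix by $M$ and its block-diagonal part by $M_0 = \diag{B,F}$. The claim decomposes into an inequality $\norm{M}_* \geq \norm{M_0}_*$ and an identity $\norm{M_0}_* = \norm{B}_* + \norm{F}_*$. My plan is to dispatch the identity by direct SVD computation and the inequality by nuclear--spectral duality.

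For the identity, I would take full SVDs $B = U_B\Sigma_BV_B^T$ and $F = U_F\Sigma_FV_F^T$ and observe that $M_0 = \diag{U_B,U_F}\,\diag{\Sigma_B,\Sigma_F}\,\diag{V_B,V_F}^T$ is, after permuting columns to sort singular values, an SVD of $M_0$. Its singular values are therefore the concatenation of those of $B$ and $F$, and summing yields $\norm{M_0}_* = \norm{B}_* + \norm{F}_*$.

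For the inequality, I would exploit the dual characterization $\norm{N}_* = \sup\{\langle Y, N\rangle : \norm{Y} \leq 1\}$, where $\norm{\cdot}$ is the spectral norm. Pick dual witnesses $Y_B$ and $Y_F$ with $\norm{Y_B},\norm{Y_F}\leq 1$, $\langle Y_B, B\rangle = \norm{B}_*$, and $\langle Y_F, F\rangle = \norm{F}_*$. Let $Y = \diag{Y_B, Y_F}$; its spectral norm equals $\max(\norm{Y_B},\norm{Y_F})\leq 1$, so $Y$ is dual-feasible. Because the off-diagonal blocks of $Y$ vanish, they annihilate $C$ and $D$ under the trace inner product, so $\langle Y, M\rangle = \langle Y_B, B\rangle + \langle Y_F, F\rangle = \norm{B}_* + \norm{F}_*$. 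Hence $\norm{M}_* \geq \langle Y, M\rangle = \norm{B}_* + \norm{F}_*$.

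The main obstacle I anticipate is that the nuclear--spectral duality invoked above is standard but is not developed in the paper, so an author may prefer a self-contained derivation. A workable substitute is to mimic the unitary-dilation manipulations used in the proof of Lemma \ref{lemma:unique:nonoise}: extend $M$ using orthogonal completions of the SVD factors of $M_0$, apply unitary invariance of the nuclear norm, and then eliminate the extraneous blocks via Lemma \ref{lemma:abcd:0condition} to extract $\norm{M_0}_*$ as a lower bound for $\norm{M}_*$. Either way, once the dual witness (or dilation) is chosen, the bookkeeping is short and the identity half is immediate.
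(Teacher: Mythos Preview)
Your proof is correct, and both routes you sketch (nuclear--spectral duality and the unitary-dilation argument echoing Lemma~\ref{lemma:unique:nonoise}) go through without difficulty. The paper, however, does not actually prove this lemma at all: it simply introduces it as a ``well-known lemma'' preceding the proof of Theorem~4.2 and moves on. So there is no approach to compare against; your duality argument is a clean, standard way to supply the missing justification, and the identity half via block-diagonal SVD is exactly the expected one-line computation.
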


The above lemma allows us to lower-bound the objective value at any solution $Z$ by the value of the
block-diagonal restriction of $Z$, and thus leads to a simple proof of Theorem 4.2.

\begin{proof}
Let $Z^*$ be the optimizer to problem (5). Form a block-diagonal matrix $W$ by
setting
\begin{eqnarray*}
[W]_{ij} = \left\{ \begin{array}{ll} [Z]_{ij}, & [A]_{:,i}\text{ and }[X]_{:,j} \; \text{belong to}\\
& \textrm{the same subspace,}\\ 0, & \text{otherwise.} \end{array} \right.
\end{eqnarray*}
Write $Q=Z^*-W$. For any data vector $[X]_{:,j}$, without loss of generality, suppose $[X]_{:,j}$ belongs to the $i$-th
subspace; i.e., $[AZ^*]_{:,j} \in S_i$. Then by construction, we have $[AW]_{:,j} \in S_i$ and $[AQ]_{:,j} \in
\oplus_{m \ne i} S_m$. But $[A Q]_{:,j} = [X]_{:,j} - [A W]_{:,j} \in S_i$. By independence, we have $S_i \cap
\oplus_{m \ne i} S_m = \{ 0 \}$, and so $[A Q]_{:,j} = 0, \;\forall \, j$.

Hence, $A Q = 0$, and $W$ is feasible for (5). By Lemma \ref{lemma:abcd}, we have $\|Z^* \|_* \ge \| W \|_*$. Also, by the uniqueness of the minimizer (see Theorem 4.1), we
conclude that $Z^*=W$ and hence $Z^*$ is block-diagonal.

Again, by the uniqueness of the minimizer $Z^*$, we can conclude that for all $i$'s, $Z_i^*$ is also the unique
minimizer to the following optimization problem:
\begin{eqnarray*}
\min_{J} \norm{J}_*, &\textrm{s.t.} & X_i=A_iJ.
\end{eqnarray*}
By Corollary 4.1, we conclude that $\rank{Z_i^*} = \rank{X_i}$.
\end{proof}
\subsubsection{Proof of Theorem 4.3}
\begin{proof}
Note that the LRR problem (7) always has feasible solution(s), e.g., $(Z=0,E=X)$ is feasible. So, an optimal solution, denoted as $(Z^*,E^*)$, exists. By Theorem 4.1, we have
\begin{eqnarray*}
Z^*&=&\arg\min_{Z}\|Z\|_* \textrm{\hspace{0.1in}s.t.\hspace{0.1in}}X-E^*=AZ\\
&=&A^{\dag}(X-E^*),
\end{eqnarray*}
which simply leads to $Z^*\in\spn{A^T}$.
\end{proof}
\subsubsection{Proof of Theorem 5.3}
\begin{proof} Let the skinny SVD of $X$ be $U\Sigma{}V^T$. It is simple to see that $(VV^T,0)$ is feasible to problem (9). By the convexity of (9), we have
\begin{eqnarray*}
\|Z^*\|_*\leq\|Z^*\|_*+\lambda\|E^*\|&\leq&\|VV^T\|_*\\
&=&\rank{X}\\
&\leq&\min(d,n).
\end{eqnarray*}
Hence,
\begin{eqnarray*}
\|Z^*-V_0V_0^T\|_F&\leq&\|Z^*-V_0V_0^T\|_*\leq\|Z^*\|_*+\|V_0V_0^T\|_*\\
&=&\|Z^*\|_*+r_0\leq\min(d,n)+r_0.
\end{eqnarray*}
\end{proof}
\subsection{Evaluation Metrics}
\subsubsection{Segmentation Accuracy (or Error)} The segmentation results can be evaluated in a similar way as classification results. Nevertheless, since segmentation methods cannot provide the class label for each cluster, a postprocessing step is needed to assign each cluster a label. A commonly used strategy is to try every possible label vectors that satisfy the segmentation results. The final label vector is chosen as the one that best matches the ground truth classification results. Such a \emph{global search} strategy is precise, but inefficient when the subspace number $k$ is large. Namely, the computational complexity is $k!$, which is higher than $2^k$ for $k\geq2$. Hence, we suggest a \emph{local search} strategy as follows: given the ground truth classification results, the label of a cluster is the index of the ground truth class that contributes the maximum number of samples to the cluster. This local search strategy is quite efficient because its computational complexity is only $O(k)$, and can usually produce the same evaluation results as global search. Nevertheless, it is possible that two different clusters are assigned with the same label. So, we use the local search strategy only when $k\geq10$.

\subsubsection{Receiver Operator Characteristic} To evaluate the effectiveness of outlier detection without choosing a parameter $\delta$ for (14), we consider the receiver operator characteristic (ROC), which is widely used to evaluate the performance of binary classifiers. The ROC curve is obtained by trying all possible thresholding values, and for each value, plotting the true positives
rate on the Y-axis against the false positive rate value on the X-axis. The areas under the ROC curve, known as AUC, provides a number for evaluating the quality of outlier detection. Note that the AUC score is the larger the better, and always ranges between 0 and 1.
\section*{Acknowledges}
We would like to acknowledge to support of "NExT Research Center" funded by MDA, Singapore, under the research grant: WBS:R-252-300-001-490.
\bibliographystyle{IEEEtran}
\bibliography{tpami_lrr_simple}
\end{document}